\def \F {{\mathbb F}}
\def \Tr {{\rm Tr_n}}
\def \T {{\rm Tr}}
\def \+ {\oplus}
\newtheorem{theorem}{Theorem}[section]
\newtheorem{definition}[theorem]{Definition}
\newtheorem{lemma}[theorem]{Lemma}
\newtheorem{proposition}[theorem]{Proposition}
\newtheorem{remark}[theorem]{Remark}
\newtheorem{example}[theorem]{Example}
\newtheorem{corollary}[theorem]{Corollary}
\newcommand{\Trn}{{\rm Tr}_n}
\begin{document}

%\title{Spread-like partitions, bent partitions and difference set partitions}

\title{P$\wp$N functions, complete mappings and quasigroup difference sets}
\author{Nurdag\"{u}l Anbar$^{1}$, Tekg\"{u}l Kalayc\i$^{1}$, Wilfried Meidl$^{1,2}$, \\
Constanza Riera$^{3}$, Pantelimon St\u anic\u a$^4$	
	\vspace{0.4cm} \\
	\small $^1$Sabanc{\i} University,	
	\small MDBF, Orhanl\i, Tuzla, 34956 \. Istanbul, Turkey\\
	\small $^2$Institut f\"ur Mathematik, Alpen-Adria-Universit\"at Klagenfurt, Austria \\
    \small $^3$Department of Computer Science, 
    \small Electrical Engineering and Mathematical Sciences,\\
    \small Western Norway University of Applied Sciences, 5020 Bergen, Norway \\
    \small $^4$Department of Applied Mathematics, Naval Postgraduate School, \\
    \small Monterey, CA 93943--5216, USA \\
	\small Email: {\tt nurdagulanbar2@gmail.com}\\
	\small Email: {\tt tekgulkalayci@sabanciuniv.edu}\\
	\small Email: {\tt meidlwilfried@gmail.com} \\ 
    \small Email: {\tt csr@hvl.no} \\
    \small Email: {\tt pstanica@nps.edu} }

\date{}

\maketitle

\begin{abstract}
We investigate pairs of permutations $F,G$ of $\F_{p^n}$ such that $F(x+a)-G(x)$ is a permutation for every 
$a\in\F_{p^n}$. We show that necessarily $G(x) = \wp(F(x))$ for some complete mapping $-\wp$ of $\F_{p^n}$, and call
the permutation $F$ a perfect $\wp$ nonlinear (P$\wp$N) function. If $\wp(x) = cx$, then $F$ is a PcN function,
which have been considered in the literature, lately. With a binary operation on $\F_{p^n}\times\F_{p^n}$
involving $\wp$, we obtain a quasigroup, and show that the graph of a P$\wp$N function $F$ is a difference set 
in the respective quasigroup. We further point to variants of symmetric designs obtained from such quasigroup
difference sets. Finally, we analyze an equivalence (naturally defined via the automorphism group of the respective quasigroup) for P$\wp$N functions, respectively, the  difference sets in the corresponding quasigroup.  
\end{abstract}

\noindent
{\bf Keywords:} permutations, $c$-differential uniformity, quasigroups, difference sets, symmetric designs

\section{Introduction}

Let $F$ be a function from the finite field $\F_{p^n}$ with $p^n$ elements, into the finite field $\F_{p^m}$. Here, and throughout the paper, $m,n$ denote positive integers.
The derivative $D_aF$ of $F$ in direction $a\in \F_{p^n}$ is the function
\begin{equation*} 
D_aF(x) = F(x+a) - F(x).
\end{equation*}
For every $a\in\F_{p^n}$, $b\in\F_{p^m}$, we define
%\begin{align*}
$\delta_F(a,b) = |\{x\in\F_{p^n}\,:\, D_aF(x) = b\}|$, and 
$\delta_F =  \max \{\delta_F(a,b)\,:\, a\in\F_{p^n}^* = \F_{p^n}\setminus\{0\}, b\in\F_{p^m}\}$. 
%\end{align*}
The function $F$ is called {\it differentially $\delta$-uniform}, if for all $a\in\F_{p^n}^*$ and $b\in\F_{p^m}$, 
we have $\delta_F(a,b)\le\delta$. The value $\delta_F$ is called the {\it differential uniformity} of $F$.

A low differential uniformity is crucial for functions from $\F_{2^n}$ to $\F_{2^m}$ used in
cryptography to thwart differential attacks~\cite{{Biham91}}. 
%As many cyphers use functions on $\F_{2^n}$, a low
%differential uniformity is particularly relevant for such functions. Since in characteristic $2$, with $x$, also
%$x+a$ is a solution for $D_aF(x) = b$, the differential uniformity of $F:\F_{2^n}\rightarrow\F_{2^n}$ is at 
%least $2$. Functions on $\F_{2^n}$ with $\delta_F = 2$ are called almost perfect nonlinear (APN).

In the case $n=m$, and if and only if the characteristic $p$ is odd, there exist some functions $F$, called {\it planar functions}, for which every derivative $D_aF$, $a\ne 0$, is a permutation, i.e.,
$\delta_F = 1$. More general, a function $F:\F_{p^n}\rightarrow\F_{p^m}$
is called {\it perfect nonlinear (PN)} or a {\it bent function}, if the derivative $D_aF$ is balanced for every nonzero
$a\in\F_{p^n}$, i.e., for every $a\in\F_{p^n}^*$, $b\in\F_{p^m}$, the equation $D_aF(x) = b$ has exactly 
$p^{n-m}$ solutions. %Clearly, this applies if and only if $\delta_F = p^{n-m}$.
Whereas such functions exist in odd characteristic for all integers $n$ and $m\le n$, if $p=2$, then $n$ must be even and $m\le n/2$ (see~\cite{nyb}).

Perfect nonlinear functions $F:\F_{p^n}\rightarrow\F_{p^m}$ have rich connections to 
objects from several areas like 
coding theory, geometry and combinatorics. Bent functions correspond to relative difference sets,
divisible designs, planar functions yield projective planes, and if they are quadratic, commutative semifields. 
Boolean bent functions correspond to difference sets. For these reasons, perfect nonlinear functions 
have been intensively investigated over the last decades \cite{carmen}, not only for characteristic $2$ where 
substantial motivation comes from cryptography, but also for odd characteristic $p$ \cite{survey}.

  Given a function from $\F_{p^n}$ to $\F_{p^m}$, the concept of a $c$-differential 
\begin{equation}
\label{c-diff}
_cD_aF(x) = F(x+a) - cF(x), \quad c\in\F_{p^m}^*, 
\end{equation}
has been introduced in Ellingsen et al.~\cite{efrst}.

A function $F$ from $\F_{p^n}$ to $\F_{p^m}$ is called {\it perfect $c$-nonlinear (PcN)}
(also {\it $c$-differential bent}), if $_cD_aF$ is balanced for all $a\in\F_{p^n}$. (In the classical case, when $c=1$, $a=0$ is exempted.)

As motivation for this definition, in several articles it is pointed to possible variants
of the differential attack. However, there is not yet such an attack, since the principal idea of
the differential attack, cancelling the key in the sum of two outputs of the S-box, does not seem to work if $c\ne 1$. However, we make some observations in the last section, where we point to future research, which may help in circumventing that. Further research is required, surely, but as it turns out, there are instances in  higher order differentials when the round keys may disappear. 

Further, many interesting connections to various kinds of difference sets, projective planes,
etc., that we see for the conventional derivative ($c=1$), are not clear when $c\ne 1$. 
Hence, one objective of this article is to endow the $c$-differential with some further meaning, and relate the concept to combinatorial objects.
For this purpose, we first extend the concept of the $c$-differential, and consider, for functions $F:\F_{p^n}\rightarrow\F_{p^m}$, 
more general, differentials of the form 
\begin{equation}
\label{wp-diff}
_\wp D_aF(x) = F(x+a) - \wp(F(x)),
\end{equation}
for a permutation $\wp$ of $\F_{p^m}$. The $c$-differential in $(\ref{c-diff})$ corresponds then to the permutation $\wp(x) = cx$; in the classical case, $\wp$ is the identity.

We call a function $F$ from $\F_{p^n}$ to $\F_{p^m}$ a P$\wp$N function if $_\wp D_aF$ is balanced for every $a\in\F_{p^n}$. Note that it is a requirement that $m\leq n$, though we are most interested in the case of $n=m$. Observe that, given a permutation $F$, we 
can write any permutation $G$ as $G(x) = \wp(F(x))$ for some permutation $\wp$.
Therefore in Equation $(\ref{wp-diff})$ for $m = n$, we consider pairs of permutations 
$F,G$ (though not obvious now, Proposition~\ref{Fwpper} implies that $F$ must be a permutation if $_\wp D_aF(x)$ is a permutation for every $a$) of $\F_{p^n}$ such that 
\[ F(x+a)- G(x) \] 
is a permutation for every $a\in\F_{p^n}$. This may lead to some interesting questions
on permutation polynomials and, as we will see, complete mappings. Notably, as for perfect nonlinear functions, we can assign such permutations 
$F$ to some variants of difference sets, namely difference sets in corresponding {\it quasigroups}.

We remark that the definition of the $\wp$-differential is independent from the representation of the $n$-dimensional 
vector space over $\F_p$, which throughout this article will be identified with the (additive group of the) finite field~$\F_{p^n}$. \\
%
%As for the conventional derivative, the extremal case of balanced $c$-differentials, i.e., the %case of PcN functions, %if $m=n$, then all $cD_aF$ are permutations, 
%may be interesting for structural properties. If $m=n$, then for a PcN function, all $cD_aF$ %are permutations, hence the study of
%PcN functions may provide to the theory of permutations.
%
%We refer to \cite{bcjw} for functions on $\F_{p^n}$, and more general to \cite{sggrt} for functions from $\F_{p^n}$
%to $\F_{p^m}$. Recall that the concepts of strictly PcN and  PcN are the same for characteristic~$2$, as observed in~\cite{BC21}.
%
The article is organized as follows.

In Section \ref{basic}, we describe some properties of $\wp$ and $F$ for P$\wp$N functions~$F$. We show that if for 
two permutations $F,G$ of $\F_{p^n}$ for which $F(x+a)- G(x)$ is a permutation for every $a\in\F_{p^n}$, then 
$G(x) = \wp(F(x))$ for an orthomorphism $\wp$ (i.e., for a complete mapping $-\wp$). We confirm that such pairs of nonlinear permutations $F,G$ exist for orthomorphisms $\wp$ other than $\wp(x) = cx$. (As we will see, linear permutations $F$ are trivial 
examples of P$\wp$N functions for every orthomorphism $\wp$.)

In Section \ref{diffset} we describe the quasigroup difference sets, which we can assign to a P$\wp$N function. 
As example, we hence analyze the 
incidence structure obtained from the development of the (quasigroup) difference set which we get from a P$\wp$N function for a linear orthomorphism $\wp$. Note that $\wp(x) = cx$ form a special class of linear orthomorphisms. There are now several examples PcN functions in the literature, many of them for $c=-1$. The incidence structure has an interpretation as a generalization of a design obtained from a difference set in a group.

In Section \ref{equiv}, we analyze the equivalence for P$\wp$N functions. Naturally, the automorphism group of the quasigroup 
which comprises the corresponding difference set has to be considered. We will show that equivalence for P$\wp$N functions is included in EA-equivalence. 

In Section~\ref{concl}, we discuss the above-mentioned observation related to differential attacks, and point to some perspectives for future research.

\section{Properties of P$\wp$N functions}
\label{basic}

So far, research concentrated on P$\wp$N permutations $F$ of $\F_{p^n}$ where $\wp(x) = cx$, $c\ne 1$, in which case we call
$F$ a PcN function. 

Differently from PN functions, which only can exist for $p$ odd, and for which, except for the Coulter-Matthews functions, all known examples
are quadratic functions (and correspond to commutative semifields), quite some examples of PcN functions, $c\ne 1$, are known (though, not too many for even characteristic):
\begin{itemize}
\item[-] Every linearized permutation of $\F_{p^n}$ is PcN for every $c \in \F_{p^n}\setminus\{1\}$.
\item[-] Every quadratic permutation over $\F_{p^n}$ is PcN for every $c\in \F_p\setminus\{1\}$, see \cite{BC21}.
\item[-] Further examples are given in the list in Appendix~B.
\end{itemize}

We first extend a result in \cite{BC21} to functions from $\F_{p^n}$ to $\F_{p^m}$.
\begin{proposition}
Let $p$ be an odd prime, $c\ne 1$ in the prime field $\F_p$, and let $F : \F_{p^n} \rightarrow \F_{p^m}$ be a quadratic function, where $m|n$.
Then $F$ is balanced if and only if $F$ is P$c$N.
\end{proposition}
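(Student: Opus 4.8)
The plan is to translate both statements into additive-character (Weil) sums and then exploit the rigid structure that a quadratic $F$ forces on the $c$-differential. Throughout let $\psi$ be the canonical additive character of $\F_{p^m}$, so $\psi(y)=\zeta_p^{\Trm(y)}$ with $\zeta_p=e^{2\pi i/p}$, and recall the standard fact that $F$ is balanced if and only if $\sum_{x\in\F_{p^n}}\psi(bF(x))=0$ for every $b\in\F_{p^m}^*$, and that $_cD_aF$ is balanced if and only if $\sum_{x}\psi\bigl(b\,{}_cD_aF(x)\bigr)=0$ for every $b\ne0$. One implication is immediate: taking $a=0$ gives ${}_cD_0F(x)=(1-c)F(x)$, and since $c\ne1$ lies in $\F_p$ the scalar $1-c$ is invertible, so $_cD_0F$ balanced forces $F$ balanced; this already yields P$c$N $\Rightarrow$ balanced. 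For the converse I would first record the quadratic expansion: writing $B(x,y)=F(x+y)-F(x)-F(y)+F(0)$ for the symmetric, $\F_p$-bilinear polar form of $F$ and $f(a)=F(a)-F(0)$, one has $F(x+a)=F(x)+B(x,a)+f(a)$, hence
\[ {}_cD_aF(x)=(1-c)F(x)+B(x,a)+f(a). \]
So every $c$-differential of $F$ is $(1-c)F$ plus an affine function of $x$ whose linear part is the fixed form $B(\cdot,a)$.

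The heart of the argument is the radical criterion for balancedness of quadratic functions. For $b\ne0$ put $g_b(x)=\Trm(bF(x))$; this is an $\F_p$-valued function of algebraic degree at most $2$ whose homogeneous quadratic part $Q_b$ has associated symmetric bilinear form $(x,y)\mapsto\Trm(bB(x,y))$, and whose linear part I denote $\ell_b$. Because $p$ is odd, completing the square shows that $\sum_x\zeta_p^{g_b(x)}$ \emph{vanishes} precisely when $\ell_b$ is nontrivial on the radical $R_b=\operatorname{rad}(Q_b)$, and is nonzero otherwise. Consequently $F$ is balanced if and only if, for every $b\ne0$, the linear part $\ell_b$ is nonzero on $R_b$.

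Now I would apply the same criterion to $_cD_aF$. Since $1-c\in\F_p$, the $b$-component is $\Trm\bigl(b\,{}_cD_aF(x)\bigr)=(1-c)g_b(x)+\Trm(bB(x,a))+\mathrm{const}$, so its quadratic part is $(1-c)Q_b$, which has the \emph{same} radical $R_b$, while its linear part is $(1-c)\ell_b+\Trm(bB(\cdot,a))$. The crucial observation is that $x\mapsto\Trm(bB(x,a))$ is exactly the polar form of $Q_b$ evaluated at the second argument $a$, so it vanishes identically on $R_b$. Hence the linear part of the $b$-component restricted to $R_b$ equals $(1-c)\,\ell_b|_{R_b}$, which is nontrivial if and only if $\ell_b|_{R_b}$ is. Thus, for every $a$, the balancedness condition for $_cD_aF$ is identical to the one for $F$, and the equivalence drops out in both directions simultaneously.

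The main obstacle, and essentially the only technical point, is establishing the radical criterion for quadratic Weil sums and correctly identifying the cancellation of $\Trm(bB(\cdot,a))$ on $R_b$. I would flag explicitly the two hypotheses that are doing real work here: $c\in\F_p$ (not merely $c\in\F_{p^m}$) is what lets the scalar $1-c$ pass through $\Trm$ and keep the differential's quadratic part proportional to $Q_b$ with unchanged radical, and $m\mid n$ ensures $\F_{p^m}$ sits inside $\F_{p^n}$ so that the component forms $g_b$ and the notion of a balanced quadratic map are set up in the intended way.
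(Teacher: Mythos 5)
Your proof is correct, but it runs along a genuinely different track from the paper's. The paper first lifts $F$ to $F=\T_m^n\circ G$ for a quadratic $G$ on $\F_{p^n}$ and then invokes the identity from \cite{BC21} that for $c\ne 1$ in $\F_p$ one has $G(x+a)-cG(x)=\alpha G(x+a/\alpha)+\beta$ with $\alpha=1-c$: the $c$-differential of a quadratic is an affine reparametrization of the quadratic itself, so balancedness is transported directly, with no character sums at all. You instead work componentwise with the $\F_p$-valued forms $g_b=\Trm(bF)$, establish the radical criterion for vanishing of quadratic Weil sums in odd characteristic, and observe that ${}_cD_aF$ has component quadratic part $(1-c)Q_b$ (same radical $R_b$) while the extra linear term $\Trm(bB(\cdot,a))$ is the polar form of $Q_b$ and hence dies on $R_b$; I checked the factorization $\sum_x\zeta_p^{Q+\ell}=\bigl(\sum_{r\in R}\zeta_p^{\ell(r)}\bigr)\bigl(\sum_{w\in W}\zeta_p^{Q(w)+\ell(w)}\bigr)$ and the degenerate case $Q_b\equiv 0$, and both go through. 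The paper's route is shorter and leans on an imported identity (plus a somewhat informally justified lifting step $F=\T_m^n\circ G$); yours is self-contained, avoids the lifting entirely by treating $F$ as a map of $\F_p$-spaces, makes explicit exactly where $p$ odd and $c\in\F_p$ enter (the scalar $1-c$ must commute with $\Trm$ so that the differential's component stays proportional to $Q_b$), and in fact never uses $m\mid n$, so it proves a slightly more general statement. Your observation that the criterion for ${}_cD_aF$ coincides verbatim with that for $F$ also delivers both implications at once, whereas the easy direction via $a=0$ is only needed as a sanity check.
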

\begin{proof}
We first show that a quadratic function $F : \F_{p^n} \rightarrow \F_{p^m}$ can be written as $F(x)=\T_{m}^n(G(x))$ for a quadratic function 
$G$ on $\F_{p^n}.$ Let $y \in \F_{p^m}$ such that $F(x_y)=y$ for some $x_y \in \F_{p^n}$. Then there exists an element $z_{y} \in \F_{p^n}$ such that
$\T_{m}^n(z_{y})=y $. Consider the function $G $ on $\F_{p^n}$ such that $G(x_y)=z_y$. Then $F(x_y)=\T_{m}^n(G(x_y))=\T_{m}^{n}(z_y)=y$, which implies 
that $F(x)=\T_{m}^n(G(x))$. Since $F$ is a quadratic function and the relative trace function is linear, $G$ is quadratic. 

As $F(x)=\T_{m}^n(G(x))$, $F(x)$ is P$c$N if and only if $F(x + a) -cF(x)=\T_{m}^n(G(x + a) - c G(x))$ is balanced for every $a \in \F_{p^n}$. 
By \cite[Equation (1)]{BC21}, for a quadratic function $G$ on $\F_{p^n}$ and $c\ne 1$ in the prime field $\F_p$,
we have
\[ G(x + a) - cG(x) = \alpha G\left(x + \frac{a}{\alpha}\right)+ \beta, \]
where $\alpha = 1-c\in\F_p^*$, and $\beta=G(a)- \alpha G\left({\frac{a}{\alpha }}\right) \in \F_{p^n}$. Therefore, $F(x)$ is P$c$N if and only if 
$F(x + a) -cF(x)= \T_{m}^n(\alpha G(x + \frac{a}{\alpha})+ \beta ) $ is balanced, which clearly holds if and only if  $\T_{m}^n(G(x))=F(x)$ is balanced. 
\end{proof}

In the next proposition we obtain some first results on P$\wp$N functions $F$ from $\F_{p^n}$ to $\F_{p^m}$
with $\wp$ not necessarily of the form $\wp(x) = cx$.
\begin{proposition}
\label{Fwpper}
Let $F:\F_{p^n}\rightarrow\F_{p^m}$ be a P$\wp$N function, with $\wp(x)\neq x$. Then $F$ is balanced and $\wp(x)-x$ is a permutation of $\F_{p^m}$.
In particular, for a permutation $\wp$ of $\F_{p^n}$, a P$\wp$N function $F:\F_{p^n}\rightarrow\F_{p^n}$ and the function $\wp(x)-x$ 
are always permutations.
\end{proposition}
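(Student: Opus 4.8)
The plan is to derive both conclusions from the single instance $a=0$ of the P$\wp$N property; the directions $a\neq 0$ are not needed here. Putting $a=0$ into $(\ref{wp-diff})$ gives
\[ {}_{\wp}D_0F(x) = F(x)-\wp(F(x)) = -g(F(x)), \qquad \text{where } g(y):=\wp(y)-y. \]
Since the hypothesis $\wp(x)\neq x$ means $\wp\neq\mathrm{id}$, the value $a=0$ is genuinely covered by the definition, so the P$\wp$N property forces $x\mapsto g(F(x))$ to be balanced (negating a balanced function keeps it balanced). This single balancedness statement is the one fact I would exploit.

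First I would show that $g=\wp-\mathrm{id}$ is a permutation of $\F_{p^m}$. As $g\circ F\colon\F_{p^n}\to\F_{p^m}$ is balanced and $n\ge m$, every element of $\F_{p^m}$ has at least $p^{n-m}\ge 1$ preimages; in particular $g\circ F$ is onto, hence $g$ is onto, and a surjection of the finite set $\F_{p^m}$ onto itself is a bijection. This is exactly the second assertion, and for $m=n$ it says that $\wp(x)-x$ is a permutation.

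Next I would deduce that $F$ is balanced, now using that $g$ is a bijection. For every $v\in\F_{p^m}$ we have $(g\circ F)^{-1}(v)=F^{-1}(g^{-1}(v))$, so balancedness of $g\circ F$ yields $|F^{-1}(g^{-1}(v))|=p^{n-m}$; as $v$ ranges over $\F_{p^m}$, the element $g^{-1}(v)$ ranges over all of $\F_{p^m}$, whence $|F^{-1}(b)|=p^{n-m}$ for every $b\in\F_{p^m}$, i.e.\ $F$ is balanced. In the case $m=n$ this says precisely that $F$ is a permutation, which finishes the displayed special case.

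The main point is conceptual rather than computational: one has to notice that $\wp\neq\mathrm{id}$ is exactly what licenses the use of the $a=0$ instance (if $\wp=\mathrm{id}$ then ${}_{\wp}D_0F\equiv 0$ can never be balanced), and that balancedness must be transported across the bijection $g$ in the right direction, from $g\circ F$ back to $F$. Once these are seen, both statements drop out at once, and no character-sum machinery or appeal to nonzero directions $a$ is required.
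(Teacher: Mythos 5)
Your argument is correct and is essentially the paper's own proof: both use only the $a=0$ instance to see that $(x-\wp(x))\circ F$ is balanced, deduce that $x-\wp(x)$ is onto and hence a permutation of $\F_{p^m}$, and then pull the balancedness back through that bijection to conclude $F$ is balanced. The only cosmetic difference is your use of $g=\wp-\mathrm{id}$ instead of $\mathrm{id}-\wp$, which changes nothing.
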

\begin{proof}
Since $F$ is P$\wp$N, the function $F(x+a)-\wp (F(x))$ is balanced for all $a\in \F_{p^n}$. In particular, in the case $a=0$, the function
	\begin{align*}
	F(x)-\wp (F(x))=(x-\wp (x))\circ F(x)
	\end{align*}
is balanced. Therefore, $x-\wp (x)$ is onto. This implies that $x-\wp (x)$ is a permutation of $\F_{p^m}$. Since $x-\wp (x)$ is a
permutation, $((x-\wp (x))\circ F(x))^{-1}(b)=F^{-1}(c)$, where  $c-\wp (c)=b$. Therefore, the balancedness of $(x-\wp (x))\circ F(x)$ implies that $F$ is balanced.
\end{proof}

By Proposition \ref{Fwpper}, $\wp(x)$ and $\wp(x)-x$ being permutations, is a necessary requirement for having P$\wp$N functions, if $\wp(x)\neq x$.
A permutation $\wp$ for which $\wp(x)-x$ is also a permutation   is called an {\it orthomorphism} (observe that, if $p=2$, such a permutation is called a {\it complete permutation polynomial}). 
Note that $\wp$ is an orthomorphism if and only if $-\wp$ is a {\it complete mapping}.

We can reformulate Proposition \ref{Fwpper} for permutations as follows. 
\begin{corollary}
Let $F,G$ be permutations of $\F_{p^n}$ such that $ F(x+a)-G(x)$ is a permutation for
every $a\in\F_{p^n}$. Then $G(x) = \wp(F(x))$ for some orthomorphism $\wp$ of $\F_{p^n}$.
\end{corollary}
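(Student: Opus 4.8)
The plan is to produce the orthomorphism $\wp$ explicitly as a composition and then reduce the orthomorphism property to Proposition~\ref{Fwpper}. Since $F$ is a permutation of $\F_{p^n}$, it is invertible, so I would define $\wp := G\circ F^{-1}$. This is a composition of two permutations, hence itself a permutation of $\F_{p^n}$, and it satisfies $\wp(F(x)) = G(F^{-1}(F(x))) = G(x)$, giving the desired representation $G(x) = \wp(F(x))$. The only real content left is to verify that this particular $\wp$ is an orthomorphism, i.e., that $\wp(x) - x$ is also a permutation.

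With this $\wp$ in hand, the hypothesis that $F(x+a) - G(x) = F(x+a) - \wp(F(x))$ is a permutation for every $a \in \F_{p^n}$ says precisely that $F$ is a P$\wp$N function (recall that for functions $\F_{p^n} \to \F_{p^n}$, balanced and permutation coincide). Before invoking Proposition~\ref{Fwpper}, I would dispose of the degenerate case $\wp = \mathrm{id}$: if $\wp$ were the identity, then $G = F$, and taking $a = 0$ in the hypothesis would force $F(x) - F(x) = 0$ to be a permutation, which is absurd. Hence $\wp(x) \neq x$, and Proposition~\ref{Fwpper} applies directly and yields that $\wp(x) - x$ is a permutation of $\F_{p^n}$.

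Combining the two facts, that $\wp = G \circ F^{-1}$ is a permutation and that $\wp(x) - x$ is a permutation, gives exactly that $\wp$ is an orthomorphism, completing the argument. I do not anticipate a genuine obstacle here: the statement is essentially a repackaging of Proposition~\ref{Fwpper} for the case $m = n$, with the extra observation that the candidate map $G \circ F^{-1}$ is forced by the requirement $G = \wp \circ F$. The one point demanding minor care is the excluded case $\wp = \mathrm{id}$, which must be ruled out separately since Proposition~\ref{Fwpper} is stated under the hypothesis $\wp(x) \neq x$; as noted, the $a = 0$ instance of the hypothesis rules it out immediately.
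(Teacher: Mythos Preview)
Your proof is correct and follows exactly the intended approach: the paper presents this corollary as a direct reformulation of Proposition~\ref{Fwpper} without spelling out the details, and your argument fills them in precisely, including the necessary elimination of the degenerate case $\wp=\mathrm{id}$ via the $a=0$ instance.
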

	
\begin{remark}
$F$ is P$\wp$N for some orthomorphism $\wp$ if and only if $F$ is P$\wp_c$N for the 
orthomorphism $\wp_c(x) = \wp(x) + c$, $c\in\F_{p^n}$. Hence we will always suppose 
that $\wp(0) = 0$.	
\end{remark}
\begin{remark}
	In multivariate representation, a linear orthomorphism is an invertible matrix $A$, for which $A-I$ is also invertible.
\end{remark}

The converse of Proposition \ref{Fwpper} does not hold in general, but as we show in the next proposition, it does
hold for linear functions $F$. This confirms that for every orthomorphism $\wp$, we have P$\wp$N functions, namely
at least the linear ones. %which we will call the trivial P$\wp$N functions.
\begin{proposition}
Let $\wp$ be an orthomorphism of $\F_{p^m}$, i.e., $\wp(x)$ and $\wp(x)-x$ are permutations of $\F_{p^m}$.
A linear function $F:\F_{p^n}\rightarrow\F_{p^m}$ is P$\wp$N if and only if $F$ is balanced.
In particular if $n=m$, then a linear function $F$ on $\F_{p^n}$ is P$\wp$N if and only if $F$ is a permutation. 
\end{proposition}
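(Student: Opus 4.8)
The forward-looking plan has two directions, and the \emph{only if} direction is immediate. Since $\wp$ is an orthomorphism, $\wp(x)-x$ is a permutation, so in particular it is not identically zero; hence $\wp$ cannot be the identity map, and $\wp(x)\neq x$ as functions. Proposition~\ref{Fwpper} then applies verbatim and tells us that \emph{any} P$\wp$N function $F$ (linear or not) must be balanced. So for this direction linearity is not even needed.

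For the \emph{if} direction I would exploit linearity to decouple the two occurrences of $F$. Writing $F(x+a)=F(x)+F(a)$, one has
\[ {}_\wp D_aF(x) = F(x)+F(a)-\wp(F(x)) = H(F(x)) + F(a), \]
where $H(y):=y-\wp(y)$. The key observation is that $H$ is a permutation of $\F_{p^m}$: since $\wp$ is an orthomorphism, $\wp(y)-y$ is a permutation, and $H(y)=-(\wp(y)-y)$ is its composition with the bijection $y\mapsto -y$, hence again a permutation.

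Now I would argue that balancedness is preserved under the two remaining operations. If $F$ is balanced, then for every $b\in\F_{p^m}$ the fiber $F^{-1}(b)$ has exactly $p^{n-m}$ elements; post-composing with the permutation $H$ merely relabels these fibers, so $H\circ F$ is again balanced, and translating the output by the fixed constant $F(a)$ is a bijection of $\F_{p^m}$ that preserves fiber sizes. Thus ${}_\wp D_aF$ is balanced for every $a\in\F_{p^n}$, i.e.\ $F$ is P$\wp$N. The special case $n=m$ then follows because a balanced self-map of $\F_{p^n}$ is precisely a permutation.

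I do not expect a genuine obstacle here; the entire content is that linearity lets $F(a)$ split off as an additive constant, reducing the statement to the trivial fact that composing a balanced map with an output-permutation and an output-translation stays balanced. The only points requiring care are checking that the orthomorphism hypothesis is used in the correct direction (so that $y-\wp(y)$, and not merely $\wp(y)-y$, is the permutation), and noting that the relabeling map $H$ is independent of $a$, so the argument runs uniformly over all directions $a\in\F_{p^n}$.
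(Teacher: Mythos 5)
Your proposal is correct and follows essentially the same route as the paper: the only-if direction is delegated to Proposition~\ref{Fwpper}, and the if direction uses linearity to rewrite ${}_\wp D_aF(x)$ as $(x-\wp(x))\circ F(x)+F(a)$, then observes that composing the balanced map $F$ with the permutation $y\mapsto y-\wp(y)$ and translating by the constant $F(a)$ preserves balancedness. Your version merely spells out a few of these routine verifications in more detail than the paper does.
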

\begin{proof}
By Proposition \ref{Fwpper}, it is enough to show that if $F$ is balanced,
then $F$ is P$\wp$N. As $F$ is linear, for $a\in\F_{p^n} $ we have
		\begin{align*}
		F(x+a)-\wp (F(x))=F(x)+F(a)-\wp (F(x))=(x-\wp(x))\circ F(x)+F(a).
		\end{align*}
		Hence, $F(x+a)-\wp (F(x))$ is balanced if and only if $(x-\wp(x))\circ F(x)$ is balanced. As $x-\wp(x)$ is a permutation and $F(x)$ is balanced,
		we obtain the assertion.
\end{proof}
This result can be extended to affine functions $F$, if $\wp$ is linearized. %This confirms that for every orthomorphism $\wp$, we have P$\wp$N functions, namely
%at least the affine balanced functions. %which we will call the trivial P$\wp$N functions.
%
\begin{proposition}
Let $\wp$ be a linearized orthomorphism of $\F_{p^m}$.
An affine function $F:\F_{p^n}\rightarrow\F_{p^m}$ is P$\wp$N if and only if $F$ is balanced.
In particular if $n=m$, then an affine function $F$ on $\F_{p^n}$ is P$\wp$N if and only if $F$ is a permutation. 
\end{proposition}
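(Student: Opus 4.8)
The plan is to reduce the affine case to the linear case treated in the previous proposition, exploiting the additivity of $\wp$ afforded by the hypothesis that $\wp$ is linearized. First, by Proposition~\ref{Fwpper}, the ``only if'' direction is immediate: if $F$ is P$\wp$N then $F$ is balanced. So the entire content lies in proving the converse, that a balanced affine $F$ is P$\wp$N.

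I would write $F(x) = L(x) + b$ with $L$ an $\F_p$-linear map and $b\in\F_{p^m}$ a constant. Observe first that $F$ is balanced if and only if $L$ is balanced, since translation by the constant $b$ permutes the image values without changing fiber sizes. Next, for any $a\in\F_{p^n}$, I would expand $F(x+a)-\wp(F(x))$ directly. Using linearity of $L$ to write $L(x+a)=L(x)+L(a)$, and crucially using that $\wp$ is linearized to split $\wp(L(x)+b)=\wp(L(x))+\wp(b)$, the expression collapses to
\[
F(x+a)-\wp(F(x)) = \bigl(L(x)-\wp(L(x))\bigr) + \bigl(L(a)+b-\wp(b)\bigr) = (x-\wp(x))\circ L(x) + C,
\]
where $C=L(a)+b-\wp(b)$ depends on $a$ but not on $x$.

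It then remains to argue balancedness of the right-hand side. Since $\wp$ is an orthomorphism, $x-\wp(x)$ is a permutation of $\F_{p^m}$, so $(x-\wp(x))\circ L$ is balanced precisely when $L$ is balanced, which holds by the first observation; adding the constant $C$ preserves balancedness. Hence $F(x+a)-\wp(F(x))$ is balanced for every $a$, i.e., $F$ is P$\wp$N. The ``in particular'' statement for $n=m$ follows at once, since a self-map of $\F_{p^n}$ is balanced exactly when it is a permutation.

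The only place where the strengthened hypothesis is needed---and the one subtle point---is the step $\wp(L(x)+b)=\wp(L(x))+\wp(b)$: for a general orthomorphism $\wp$ the additive constant $b$ does not pass through $\wp$ cleanly, so the affine term would no longer separate into a composition with $x-\wp(x)$ plus a constant. This is exactly why the result is stated for linearized $\wp$ rather than for arbitrary orthomorphisms, and I expect no obstacle beyond verifying this decomposition.
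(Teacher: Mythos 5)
Your proposal is correct and follows essentially the same route as the paper: write $F = L + \text{constant}$, use the linearity of $\wp$ to pull the constant out of $\wp(F(x))$, and reduce to the balancedness of $(x-\wp(x))\circ L$, with the converse direction supplied by Proposition~\ref{Fwpper}. The only difference is cosmetic (the paper writes the constant as $-\alpha$ rather than $+b$).
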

\begin{proof}
Again, by Proposition \ref{Fwpper}, it is enough to show that if $F$ is balanced,
then $F$ is P$\wp$N. We can write $F(x)=L(x)-\alpha$, where $L$ is a linearized function, and $\alpha\in\F_{p^m}$. As $L$ is linear, for $a\in\F_{p^n} $ we have
		\begin{align*}
		F(x+a)-\wp (F(x))&=L(x)+L(a)-\alpha-(\wp (L(x))-\wp(\alpha))\\
&=(x-\wp(x))\circ L(x)+L(a)-\alpha+\wp(\alpha).
		\end{align*}
		Hence, $F(x+a)-\wp (F(x))$ is balanced if and only if $(x-\wp(x))\circ L(x)$ is balanced. As $x-\wp(x)$ is a permutation and $L(x)$ is balanced (note that $L$ is balanced if and only if $F$ is balanced),
		we obtain the assertion.
\end{proof}

As every linear permutation is P$\wp$N for every orthomorphism $\wp$, and every
quadratic permutation is P$\wp$N for $\wp(x) = cx$, $c\in \F_p\setminus\{0,1\}$, one
may conclude that P$\wp$N functions are not very rare objects (for odd characteristic). In the meantime also 
several examples of nonlinear and not quadratic PcN functions are known for several 
values of $c$, see the tables in Appendix~B.

However, if we only slightly change the orthomorphism $cx$ to another linearized
monomial $\wp(x) = cx^{p^j}$, the situation becomes quite different.
With examples for small field size, one can computationally confirm that, in general, the 
simplest quadratic permutation of $\F_{2^n}$, the Gold function $F(x) = x^{2^k+1}$,
$\gcd(2^n-1,2^k+1) = 1$, is not P$\wp$N. In the following proposition it is shown that for sufficiently large finite fields, the function $x^{2^k+1}$ never is P$\wp$N for $\wp(x) = cx^{2^j}$. 
This supports the assumption that for orthomorphisms other than 
$\wp(x) = cx$, P$\wp$N functions are harder to find. 
\begin{proposition}
\label{NixPcN}
For a divisor $j$ of $n$, let $\wp(x) = cx^{2^j}$, where $c$ is not a $(2^j-1)$-th power in $\F_{2^n}$.
For every $j,k$, the Gold function $F(x) = x^{2^k+1}$ is not P$\wp$N for sufficiently large $n$ (with $\gcd(2^n-1,2^k+1) = 1$).
\end{proposition}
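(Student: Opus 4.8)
The plan is to prove directly that, for a suitable $a$, the map ${}_\wp D_aF$ is not injective. In characteristic $2$ we have
$$P_a(x):={}_\wp D_aF(x)=(x+a)^{2^k+1}+c\,x^{(2^k+1)2^j},$$
and $F$ is P$\wp$N if and only if $P_a$ is a permutation for every $a$. Hence it suffices to produce one triple $(a,x,t)$ with $t\neq 0$ and $P_a(x)=P_a(x+t)$.

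First I would expand the second difference $P_a(x)+P_a(x+t)$. Each of the two Gold-type terms contributes only monomials $x^{2^i}$ and a constant, so $P_a(x)+P_a(x+t)=L_t(x)+\kappa(t,a)$ is $\F_2$-affine in $x$. A direct computation gives the factorization $L_t=M\circ N_t$, where $N_t(x)=t\,x^{2^k}+t^{2^k}x$ and $M(w)=w+c\,w^{2^j}$, together with $\kappa(t,a)=N_t(a)+M(T)$ and $T=t^{2^k+1}$. The hypothesis that $c$ is not a $(2^j-1)$-th power is exactly the condition making $M$ a bijective linearized polynomial (a nonzero kernel element would satisfy $w^{2^j-1}=c^{-1}$); therefore $\mathrm{Im}(L_t)=M(\mathrm{Im}\,N_t)$, and $\mathrm{Im}\,N_t=T\mathcal T$ where $\mathcal T=\{y^{2^k}+y:y\in\F_{2^n}\}=\ker\T_d^n$, $d=\gcd(k,n)$ (the last equality is additive Hilbert~90, using $\langle x\mapsto x^{2^k}\rangle=\mathrm{Gal}(\F_{2^n}/\F_{2^d})$).

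For a fixed $t\neq 0$, the affine equation $L_t(x)=\kappa(t,a)$ is solvable for some $a$ exactly when the coset $M(T)+\mathrm{Im}\,N_t$ meets $\mathrm{Im}(L_t)$, i.e. when $M(T)\in T\mathcal T+M(T\mathcal T)$. Writing the two summands as $Tv$ and $M(Tw)$ and putting $u=1+w$, this membership unwinds to the statement that there exist $T\in\F_{2^n}^\ast$ and $u\in\F_{2^n}$ with
$$\T_d^n(u)=1\qquad\text{and}\qquad \T_d^n\!\bigl(c\,T^{2^j-1}u^{2^j}\bigr)=1.$$
The Gold condition $\gcd(2^n-1,2^k+1)=1$ enters here twice: it forces $n/d$ to be odd, so $\T_d^n(1)=1$ (which is what turns $u\in 1+\mathcal T$ into $\T_d^n(u)=1$), and it makes $t\mapsto t^{2^k+1}$ a bijection, so $T$ may be taken arbitrary in $\F_{2^n}^\ast$.

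Finally I would count the pairs $(T,u)\in\F_{2^n}^\ast\times\F_{2^n}$ satisfying the two displayed trace conditions, using additive characters of $\F_{2^d}$ pulled back to $\F_{2^n}$ through $\T_d^n$. The principal term equals $(2^n-1)2^n/2^{2d}$. For each nontrivial character pair $(\lambda,\mu)$ the inner sum over $u$ is the character sum of a linearized polynomial, hence is $0$ or $2^n$; the value $2^n$ can occur only when $\mu\neq0$ and $T^{2^j-1}=\lambda^{2^j}(\mu c)^{-1}$, an equation with at most $2^j-1$ solutions because $j\mid n$. Thus the nonprincipal contribution is at most $2^{2d}\cdot 2^n(2^j-1)$ in absolute value, and the count is positive as soon as $(2^n-1)2^n$ exceeds this, i.e. once $2^n-1>2^{2d}(2^j-1)$. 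As $j,k$ are fixed and $d=\gcd(k,n)\le k$ stays bounded, this holds for all sufficiently large $n$, yielding the bad pair $(T,u)$ and thus an $a$ for which ${}_\wp D_aF$ is not a permutation. The main obstacle is the bookkeeping of the second and third paragraphs—extracting the factorization $L_t=M\circ N_t$ and recognizing $\mathrm{Im}\,N_t$ as a relative-trace kernel; once the linearized structure is laid bare, every character sum collapses to $0$ or $2^n$ and the ``large $n$'' threshold falls out with no genuine Weil estimate needed.
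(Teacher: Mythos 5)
Your argument is correct, and it takes a genuinely different route from the paper's. The paper looks specifically for a collision of ${_\wp}D_aF$ with the point $0$: it reduces $H(x)={_\wp}D_aF(x)-{_\wp}D_aF(0)=0$, after the substitutions $x=ay$, $z=a^{2^k+1}$, to finding a rational point with $yz\neq 0$ on the Kummer curve $z^{2^j-1}=c^{-1}(y^{2^k}+y^{2^k-1}+1)/y^{2^j(2^k+1)-1}$, proves absolute irreducibility via ramification, and invokes the Hasse--Weil bound; the explicit threshold $n>2(j+k)$ then requires the genus computation of Appendix~A. You instead look for an arbitrary collision $P_a(x)=P_a(x+t)$ and exploit the fact that $F$ is quadratic, so the second difference is $\F_2$-affine: the factorization $L_t=M\circ N_t$ with $M(w)=w+cw^{2^j}$ bijective (precisely because $c$ is not a $(2^j-1)$-th power) and $\mathrm{Im}\,N_t=t^{2^k+1}\ker\T_d^n$ (additive Hilbert~90, $d=\gcd(k,n)$) converts solvability into the two trace conditions $\T_d^n(u)=1$, $\T_d^n(cT^{2^j-1}u^{2^j})=1$, where the Gold permutation hypothesis is used exactly where you say (to get $n/d$ odd, hence $\T_d^n(1)=1$, and to let $T$ range over $\F_{2^n}^\ast$). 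Since every inner character sum has a linearized argument, it is $0$ or $2^n$, and the count is positive once $2^n-1>2^{2d}(2^j-1)$ — an explicit, elementary threshold at least as good as the paper's $n>2(j+k)$, obtained with no Weil estimate and no irreducibility argument. The trade-off is that your method is tied to the quadratic nature of the Gold function (the second difference being affine is what makes everything collapse), whereas the paper's curve-theoretic approach is the one that generalizes to non-quadratic $F$; within the stated proposition, though, your proof is both complete and sharper.
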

\begin{proof}
First note that $\wp$ is an orthomorphism since $c$ is not a $(2^j-1)$-th power (which always exists as $j$ divides $n$).
We can suppose that $\gcd(2^n-1,2^k+1) = 1$ since otherwise $x^{2^k+1}$ is not a permutation.

To show that $F$ is not P$\wp$N for all sufficiently large $n$, observe that
\begin{align*}
F(x+a)+\wp (F(x))&=(x+a)^{2^k+1}+cx^{2^j(2^k+1)}\\
&=cx^{2^j(2^k+1)}+x^{2^k+1}+ ax^{2^k}+a^{2^k}x+a^{2^k+1}.
\end{align*} 
That is, $F(x+a)+\wp (F(x))$ is a permutation of $\F_{2^n}$ if and only if 
$$H(x)=F(x+a)+\wp (F(x))+a^{2^k+1}=cx^{2^j(2^k+1)}+x^{2^k+1}+ ax^{2^k}+a^{2^k}x$$ is a permutation of $\F_{2^n}$. We will show that for sufficiently large $n$, there exists $a\in \F_{2^n}^*$ such that $H(x)$ has a nonzero root in $\F_{2^n}$, which gives the desired conclusion.\\
Set $x=ay\neq0$. Then we have
\begin{align*}
H(ay)=ca^{2^j(2^k+1)}y^{2^j(2^k+1)}+a^{2^k+1}y^{2^k+1}+ a^{2^k+1}y^{2^k}+a^{2^k+1}y.
\end{align*}
Hence $H(ay)=0$ if and only if 
\begin{align}\label{eq:a}
a^{(2^j-1)(2^k+1)}=c^{-1}\frac{y^{2^k+1}+ y^{2^k}+y}{y^{2^j(2^k+1)}}.
\end{align}
Recall that $a^{2^k+1}$ is a permutation of $\F_{2^n}$ and $y\neq 0$. Hence, by setting $z=a^{2^k+1}$, 
Equation \eqref{eq:a} holds if and only if 
\begin{align}\label{eq:z}
z^{2^j-1}=c^{-1}\frac{y^{2^k}+ y^{2^k-1}+1}{y^{2^j(2^k+1)-1}}.
\end{align}
Let $F$ be the function field of the curve defined by Equation \eqref{eq:z}, i.e., $F=\F_{2^n}(y,z)$ with 
$z^{2^j-1}=c^{-1}({y^{2^k}+ y^{2^k-1}+1})/{y^{2^j(2^k+1)-1}}$. Since $2^j-1$ is a divisor of $2^n-1$, $F/\F_{2^n}(y)$ is a Kummer extension of degree $2^j-1$. For the properties of Kummer extensions, we refer to \cite[Proposition 3.7.3]{sti}. Note that $p(T)=T^{2^k}+ T^{2^k-1}+1$ is a separable polynomial. Therefore, any zero of $y^{2^k}+ y^{2^k-1}+1$ is totally ramified in $F/\F_{2^n}(y)$. This implies that $\F_{2^n}$ is the full constant field of $F$. Hence, Equation \eqref{eq:z} defines an absolutely irreducible curve $\mathcal{X}$ over $\F_{2^n}$, see \cite[Corollary 3.6.8]{sti}. Then the Hasse-Weil bound 
\cite[Theorem 5.2.3]{sti} implies that $\mathcal{X}$ has a sufficiently large number of rational points for all sufficiently large values of $n$. Together with Bezout's theorem, we conclude that there exists a rational point $(y,z)\in \mathcal{X}$ satisfying $yz\neq 0$. As $z=a^{2^k+1}$ and $y=ax$, this shows the existence  of $a\in \F_{2^n}^*$ for which $G(x)$ has a nonzero root in $\F_{2^n}$. 
\end{proof}

\begin{remark}
One can show that Proposition \textup{\ref{NixPcN}} applies at least for $n > 2(j+k)$.
For the proof we refer to Appendix~A.
\end{remark}

We finish this section with some examples of  P$\wp$N functions for orthomorphisms other than $cx$, in both even and odd characteristics. We start with the  quadratic orthomorphism
$\wp(x) = bx + (x^{2^m}+x)^3$ on $\F_{2^n}$, $n=2m$.
\begin{theorem} \label{thm:ex}
Let $n = 2m$, $m$ odd, $b\in \F_{2^m} \setminus \{0,1\}$ and let $F(x)=bx+(x^{2^m} + x)^3$. Then $F$ is P$\wp$N for the orthomorphism
$\wp(x) = F(x) =bx+(x^{2^m} + x)^3$.
\end{theorem}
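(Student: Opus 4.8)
The plan is to prove the balancedness condition directly, exploiting the field tower $\F_{2^m}\subseteq\F_{2^n}$ with $n=2m$. Throughout I would write $t=\T_m^n(x)=x+x^{2^m}\in\F_{2^m}$ and $s=\T_m^n(a)$, and record the two facts that drive everything: since $b\in\F_{2^m}$ and $t^3\in\F_{2^m}$, we may read $F$ as $F(x)=bx+t^3$, a ``multiplication by $b\in\F_{2^m}^*$ plus a function of the relative trace''. In characteristic two, $F$ being P$\wp$N for $\wp=F$ amounts to showing that, for every $a\in\F_{2^n}$, the map $G_a(x):=F(x+a)+F(F(x))$ is a permutation of $\F_{2^n}$. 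Before that I would confirm that $\wp=F$ really is an orthomorphism, as the statement asserts: both $F(x)=bx+t^3$ and $F(x)+x=(b+1)x+t^3$ have the shape $\lambda x+t^3$ with $\lambda\in\F_{2^m}^*$ (here $\lambda=b$ and $\lambda=b+1$, both nonzero precisely because $b\notin\{0,1\}$), and the short injectivity argument used below for $G_a$ applies verbatim to show that any such map permutes $\F_{2^n}$.

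The computational heart is the self-composition $F\circ F$. First I would note $\T_m^n(F(x))=bt$: indeed $F(x)^{2^m}=b\,x^{2^m}+t^3$ because $b,t^3\in\F_{2^m}$, so $\T_m^n(F(x))=F(x)+F(x)^{2^m}=b(x+x^{2^m})=bt$. Feeding this back,
\[
F(F(x)) = b\,F(x) + \big(\T_m^n(F(x))\big)^3 = b(bx + t^3) + (bt)^3 = b^2 x + (b + b^3)\,t^3 .
\]
Since $F(x+a)=b(x+a)+(t+s)^3=bx+ba+(t+s)^3$, adding gives
\[
G_a(x) = (b + b^2)\,x + ba + (t+s)^3 + (b+b^3)\,t^3 ,
\]
which is again of the form (nonzero $\F_{2^m}$-multiple of $x$) $+$ (additive constant $ba$) $+$ ($\F_{2^m}$-valued function $g(t):=(t+s)^3+(b+b^3)t^3$ of the relative trace), with leading coefficient $b+b^2=b(1+b)\in\F_{2^m}^*$.

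The permutation property of $G_a$ is then immediate and is the crux of the argument. If $G_a(x)=G_a(y)$, then, the constant $ba$ cancelling, $(b+b^2)(x+y)=g(t_x)+g(t_y)\in\F_{2^m}$; since $b+b^2\in\F_{2^m}^*$ we get $x+y\in\F_{2^m}$, hence $\T_m^n(x+y)=0$, i.e.\ $t_x=t_y$. But then the right-hand side collapses to $g(t_x)+g(t_x)=0$, forcing $x=y$. Thus $G_a$ is injective, hence a permutation of $\F_{2^n}$, for every $a$, which is exactly the P$\wp$N condition.

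The one place requiring care is conceptual rather than technical: one must notice that the self-composition $F\circ F$ collapses back to the same ``$\F_{2^m}$-multiple of $x$ plus a function of $\T_m^n x$'' shape, after which balancedness needs no analytic input (in contrast to the Hasse--Weil argument of Proposition~\ref{NixPcN}); the only inequality used is $b+b^2\neq 0$, which is precisely the hypothesis $b\notin\{0,1\}$. I would also remark that the hypothesis $m$ odd is not needed for the balancedness step itself; it enters through $\gcd(3,2^m-1)=1$, which makes $t\mapsto t^3$ a permutation of $\F_{2^m}$ and thereby exhibits $\wp$ as a genuine quadratic orthomorphism in the family highlighted in the paper (and gives a clean, uniform check that $\wp$ and $\wp(x)-x$ permute $\F_{2^n}$).
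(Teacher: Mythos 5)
Your proof is correct, and while it rests on the same structural fact as the paper's proof --- that the nonlinear part of $F$ factors through the relative trace $t=\T_m^n(x)=x+x^{2^m}\in\F_{2^m}$ while the linear part is multiplication by an element of $\F_{2^m}^*$ --- your execution is genuinely different and, I think, cleaner. The paper fixes a primitive cube root of unity $\zeta$ and decomposes $x=y+\zeta z$ over $\F_{2^m}$ (which is where the hypothesis that $m$ is odd is actually used: it guarantees $\zeta\notin\F_{2^m}$, so $\{1,\zeta\}$ is a basis), then verifies surjectivity of $H$ by solving for $z$ and then $y$. You instead run a direct injectivity argument: $G_a(x)=G_a(y)$ forces $(b+b^2)(x+y)\in\F_{2^m}$, hence $x+y\in\F_{2^m}$, hence $t_x=t_y$, hence $x=y$. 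This buys two things: it dispenses with the basis (and in fact shows the statement holds for all $m$, not only odd $m$), and your computation of $F(F(x))=b^2x+(b+b^3)t^3$ via $\T_m^n(F(x))=bt$ is tidier than the paper's full expansion (whose displayed intermediate coefficient of $(x^{2^m}+x)^3$ contains a slip --- it should be $1+b+b^3$ rather than $b+1$ --- though this does not affect the paper's conclusion, since only the $\zeta$-component matters there). One small inaccuracy in your closing remark: the parity of $m$ does not enter through $\gcd(3,2^m-1)=1$ either --- your injectivity argument for maps of the form $\lambda x+t^3$, $\lambda\in\F_{2^m}^*$, never needs $t\mapsto t^3$ to be a permutation of $\F_{2^m}$ --- so the hypothesis is simply unused in your approach; this is a side comment and not a gap in the proof.
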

\begin{proof}
 We first show that $F$ is a permutation polynomial of $\F_{2^n}$. Let $\zeta$ be an element satisfying
$\zeta^2 + \zeta + 1=0$. Then $\{1, \zeta\}$ forms a basis of $\F_{2^n}$ over $\F_{2^m}$. 
Note that $\F_{4}=\F_{2}(\zeta)$, i.e., $\zeta^3=1$ and $\zeta^{2^m}=(\zeta^{2^{m - 1} - 1}\zeta)^2=\zeta^2=\zeta + 1$, as 
	$m$ is odd, and hence $3 \ | \ 2^{m - 1} - 1$. 
Writing $x \in \F_{2^n}$ as $x = y + \zeta z$ for some unique $y, z \in \F_{2^m}$, we have the following equalities.
	\begin{align*}  
	F(x) = bx+(x^{2^m} + x)^3
	&=b( y + \zeta z) +( (y + \zeta z)^{2^m} + y + \zeta z)^3
	\\&=b y + \zeta b z + (\zeta^{2^m} + \zeta)^3z^3
	\\&=by + z^3 + \zeta bz,
	\end{align*}
where we used $\zeta^{2^m}+\zeta+1=0$ in the second equality. 	
We immediately see that $F$ is onto, hence a permutation. In fact for given $u + v\zeta$, we have 
$F(y+z\zeta) = u+v\zeta$ with $z = v/b$ and $y = (u+z^3)/b$ ($z = v/b$).
As readily seen, $F$ is an orthomorphism as $\bar{b}x+(x^{2^m} + x)^3$ is a permutation for $\bar{b} = b+1 \not\in\{0,1\}$.
%
%Therefore, $F(x)= 	bx+(x^{2^m} + x)^3$ is a permutation polynomial of $\F_{2^n}$ if and only if $G(y, z)= (G_1(y, z), G_2(y, z))$ is a permutation polynomial of 
%$\F_{2^m} \times \F_{2^m}$. For $(\alpha, \beta), (\gamma, \delta) \in \F_{2^m} \times \F_{2^m} $, 
%$G(\alpha, \beta)=G(\gamma, \delta)$ if and only if 
%$G_1(\alpha, \beta)=G_1(\gamma, \delta)$ and $G_2(\alpha, \beta)=G_2(\gamma, \delta)$. Suppose that $G(\alpha, \beta)= G(\gamma, \delta)$. Then 
%$G_2(\alpha, \beta)=\zeta b \beta=\zeta b \delta= G_2(\gamma, \delta)$. Since $b\neq 0$, we obtain $\beta=\delta$. 
%Together with $G_{1}(\alpha, \beta)=b\alpha+ \beta^3=b\gamma+ \delta^3=G_{2}(\gamma, \delta)$, we get $\alpha=\gamma$, again, as $b \neq 0$. 
%Hence $F(x)$ is  a permutation polynomial of $\F_{2^n}$.\\
	
It remains to show that 
	$F(x + a) + F(F(x))$ is a permutation polynomial of $\F_{2^n}$ for all $a \in \F_{2^n}$. We have
	\begin{align*}
	&F(x + a) + F(F(x))\\
	&=	b(x + a) + (x^{2^m} + x + a^{2^m} + a)^3 + F(bx + (x^{2^m} + x)^3)\\
	& =b(x + a) + (x^{2^m} + x)^3 + (x^{2^m} + x)^2(a^{2^m} + a) \\
	&\qquad 	+ (x^{2^m} + x)(a^{2^m} + a)^2  + (a^{2^m} + a)^3    + b(bx + (x^{2^m} + x)^3) \\
	&\qquad + \left((bx + (x^{2^m} + x)^3)^{2^m} + (bx + (x^{2^m} + x)^3)\right)^3 \\
	&  =bx + ba + (x^{2^m} + x)^3 + (x^{2^m} + x)^2(a^{2^m} + a) \\
	&\qquad + (x^{2^m} + x)(a^{2^m} + a)^2 + (a^{2^m} + a)^3  + b^2x + b(x^{2^m} + x)^3\\
	&\qquad + \left( b^{2^m}x^{2^m} + (x^{2^m} + x)^3 + bx + (x^{2^m} + x)^3\right)^3.
	\end{align*}
We can ignore the constant term $ba + (a^{2^m} + a)^3$, set $a^{2^m} + a = c \in\F_{2^m}$, and show that 
\begin{align*}
H(x)=(b + b^2)x + (b + 1)(x^{2^m} + x)^3 + c(x^{2^m} + x)^2 + (c^2 + b)(x^{2^m} + x)
\end{align*}
is a permutation polynomial, where $H(x)=F(x + a) + F(F(x))+ba + (a^{2^m} + a)^3$.
 Substituting  $x=y + \zeta z$ into $H(x)$, and using the fact
that $x^{2^m} + x = (y + \zeta z)^{2^m} + y + \zeta z = z$, we obtain
\begin{align*}
H(y + \zeta z)&=(b + b^2)y + (b+1)z^3 + cz^2 + (c^2 + b)z + \zeta(b + b^2)z.
\end{align*}
Given $u,v\in\F_{2^m}$, we uniquely obtain $z = v/(b+b^2)$ and then $y$ from $H(y+z \zeta) = u + v\zeta$. Note that $b+b^2\ne 0$.
Consequently, $H$ is onto, hence a permutation.
\end{proof}
%
%	
%\begin{question}
%Do some DO permutations work for some other (linearized) orthomorphisms. Simplest is perhaps $\wp(x) = cx$, $c$ not in the
%prime field. (SHALL WE? SEEMS TERE ARE ALREADY MANY EXAPLES OF PcN, MAYBE IT IS THEN NOT SO INTERESTING)
%Does then Bartoli, Calderini give conditions on the DO permutation?
%Or is a linearized orthpmorphism $\wp$ with coefficients in the prime field better?
%\end{question}
%

\begin{remark}
We observe that $F(x)$ given in Theorem~\textup{\ref{thm:ex}} stays P$\wp$N for any extension $\F_{2^{kn}}$ of $\F_{2^n}$ of odd degree~$k$. In other words, $F(x)$ stays P$\wp$N for infinitely many extensions $\F_{2^n}$. We call such an $F(x)$, an exceptional P$\wp$N function.
\end{remark}

We now provide a class of P$\wp$N functions on $\F_{p^n}$ for odd characteristic~$p$. 
\begin{theorem}
Let $m,k$ be  positive integers, $n=2m$, and $q$ a power of an odd prime~$p$. Let   $F(x)=\left(x^{q^m}-x\right)^{2k}- x\in\F_{q^n}[x]$.
Then $F$ is a P$\wp$N function with respect to the orthomorphism~$\wp(x)=F(x)$.
\end{theorem}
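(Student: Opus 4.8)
The plan is to follow the same strategy that worked for the even-characteristic example in Theorem~\ref{thm:ex}: exploit the splitting $\F_{q^n} = \F_{q^m} \oplus \zeta\F_{q^m}$ (with $n=2m$) to reduce everything to computations over $\F_{q^m}$. The crucial structural observation is that $x^{q^m}-x$ is, up to a choice of basis, the ``imaginary part'' of $x$. Writing $x = y + \zeta z$ for a suitable $\zeta$ with $\zeta^{q^m} = -\zeta$ (i.e. $\zeta \notin \F_{q^m}$ but $\zeta^2 \in \F_{q^m}$), we get $x^{q^m}-x = -2\zeta z$, so that $(x^{q^m}-x)^{2k} = (2\zeta z)^{2k} = (4\zeta^2)^k z^{2k} \in \F_{q^m}$. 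Thus $F(y+\zeta z) = (4\zeta^2)^k z^{2k} - y - \zeta z$, which cleanly separates into $\F_{q^m}$-component $(4\zeta^2)^k z^{2k} - y$ and $\zeta$-component $-z$.

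First I would verify that $F$ is a permutation: given a target $u + \zeta v$, the $\zeta$-component forces $z = -v$, and then the $\F_{q^m}$-component $(4\zeta^2)^k z^{2k} - y = u$ determines $y$ uniquely. So $F$ is onto, hence a permutation. Next I would confirm $F$ is an orthomorphism, i.e. that $F(x)-x$ is also a permutation: since $F(x)-x = (x^{q^m}-x)^{2k} - 2x$, the same coordinate computation gives $\zeta$-component $-2z$ (nonzero map since $p$ is odd, so $2 \neq 0$) and $\F_{q^m}$-component $(4\zeta^2)^k z^{2k} - 2y$, which is again solvable uniquely. This establishes the hypotheses needed by Proposition~\ref{Fwpper} and justifies that $\wp = F$ is a legitimate orthomorphism.

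The main work is to show $F(x+a) - \wp(F(x)) = F(x+a) - F(F(x))$ is a permutation for every $a$. I would substitute $x = y+\zeta z$ and expand both terms. For $F(x+a)$, writing $a = \alpha + \zeta\beta$, the key point is that $(x+a)^{q^m}-(x+a) = -2\zeta(z+\beta)$, so $F(x+a) = (4\zeta^2)^k(z+\beta)^{2k} - (y+\alpha) - \zeta(z+\beta)$. For the composition $F(F(x))$, I set $w := F(x)$ and reuse the already-computed coordinates of $w$: its $\zeta$-component is $-z$, so $(w^{q^m}-w)^{2k} = (4\zeta^2)^k z^{2k}$ again, giving $F(F(x)) = (4\zeta^2)^k z^{2k} - \big((4\zeta^2)^k z^{2k} - y\big) - \zeta(-z) = y + \zeta z$ (the $\F_{q^m}$-components telescope). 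Subtracting, the difference $F(x+a)-F(F(x))$ should have $\zeta$-component $-(z+\beta) - z = -2z - \beta$ and $\F_{q^m}$-component $(4\zeta^2)^k(z+\beta)^{2k} - (y+\alpha) - y$; up to the constant $-\alpha - \beta$-type terms that can be absorbed, the $\zeta$-component $-2z-\beta$ is a permutation in $z$ (as $p$ is odd), fixing $z$, after which the $\F_{q^m}$-component is linear in $y$ with nonzero coefficient $-2$, fixing $y$ uniquely.

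The step I expect to be most delicate is the precise bookkeeping in $F(F(x))$: one must track that the inner $F$ already moves the coordinates, and confirm that the high-degree term $(4\zeta^2)^k z^{2k}$ coming from the outer $(w^{q^m}-w)^{2k}$ exactly cancels the corresponding term carried inside $w$, so that no residual $z^{2k}$ survives in the final $\F_{q^m}$-component. If that cancellation holds as sketched, the resulting map is essentially affine in $(y,z)$ with an invertible linear part (two factors of $-2$, both units in odd characteristic), which makes it a permutation by the same onto-argument used throughout. I would also double-check the exact value of $\zeta^{q^m}$ and hence the constant $(4\zeta^2)^k$, since the choice of $\zeta$ (and whether $q^m \equiv 3 \pmod 4$ etc.) governs whether $\zeta^{q^m} = -\zeta$; the argument only needs $\zeta^{q^m}-\zeta \neq 0$ and $\zeta^{q^m}+\zeta \in \F_{q^m}$, which I would state explicitly before expanding.
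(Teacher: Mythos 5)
Your argument is correct, but it takes a genuinely different route from the paper's. The paper treats $F$ as a polynomial of the shape $(x^{q^m}-x+\delta)^{2k}+L(x)$ and invokes the Yuan--Ding criterion twice as a black box: once with $\delta=0$, $L(x)=-x$ to see that $F=\wp$ is a permutation, and once with $\delta=a^{q^m}-a$, $L(x)=-2x$ to see that $F(x+a)-F(F(x))$ is a permutation, after a direct symbolic computation showing $F(F(x))=x$. You instead work in coordinates over the subfield, writing $x=y+\zeta z$ with $\zeta^{q^m}=-\zeta$ (such $\zeta$ exists since $q$ is odd: take a square root of a non-square of $\F_{q^m}$; and in fact any $\zeta\notin\F_{q^m}$ would do, because $(\zeta^{q^m}-\zeta)^{q^m}=-(\zeta^{q^m}-\zeta)$ forces $(\zeta^{q^m}-\zeta)^{2}\in\F_{q^m}$ automatically). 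This reduces $F$ to the triangular map $(y,z)\mapsto\bigl((4\zeta^2)^k z^{2k}-y,\,-z\bigr)$, from which the permutation, orthomorphism, involution, and P$\wp$N properties all follow by the same ``solve for $z$ first, then for $y$'' argument; your telescoping in $F(F(x))$ does check out and recovers $F\circ F=\mathrm{id}$ exactly as in the paper. What each approach buys: the paper's proof is shorter given the citation and its reliance on the general Yuan--Ding statement (arbitrary linearized $L$ and arbitrary $\delta$ with $\delta^{q^m}=-\delta$) points toward generalizations of the construction; your proof is self-contained and elementary, makes the structure of $F$ completely transparent, and unifies this odd-characteristic example with the even-characteristic Theorem~\ref{thm:ex}, which the paper proves by exactly this subfield-coordinate technique.
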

\begin{proof}
In~\cite{YD14} it was shown (with our notations) that if $\delta^{q^m}=-\delta$ and  $L$ is a $p$-linearized polynomial, then a polynomial of the form $G(x)=(x^{q^m}-x+\delta)^{2k}+L(x)$  is a permutation polynomial on $\F_{q^n}$ if and only if $L$ is a permutation polynomial.  Taking $L(x)=-x$ and $\delta=0$, it follows that $F=\wp$ is a permutation polynomial.

We next compute
\begin{align*}
F(F(x))&= \left( (F(x)^{q^m}-F(x)\right)^{2k}-F(x)\\
&= \left( \left(\left(x^{q^m}-x \right)^{2k}- x \right)^{q^m}-\left(x^{q^m}-x\right)^{2k}+ x \right)^{2k}  -\left(x^{q^m}-x \right)^{2k}+x\\
&=\left(  \left(x^{q^m}-x \right)^{2k q^m}- x^{q^m} -\left(x^{q^m}-x \right)^{2k}+ x\right)^{2k} -\left(x^{q^m}-x \right)^{2k}+x\\
&= \left(  \left(x^{q^{2m}}-x^{q^m} \right)^{2k}- x^{q^m} -\left(x^{q^m}-x\right)^{2k}+ x\right)^{2k}  -\left(x^{q^m}-x \right)^{2k}+x\\
&= \left(  \left(x-x^{q^m}  \right)^{2k}- x^{q^m} -\left(x^{q^m}-x \right)^{2k}+ x\right)^{2k} -\left(x^{q^m}-x \right)^{2k}+x\\
&=x,
\end{align*}
that is, $F$ is self-invertible.
  Further, we write
\begin{align*}
F(x+a)-\wp(F(x))&= \left(x^{q^m}-x -a^{q^m}+a \right)^{2k}-(x+a) -x.
\end{align*}
It will be sufficient to show that   $H_a(x)=\left(x^{q^m}-x -a^{q^m}+a \right)^{2k}-2x$
is a permutation. Taking $\delta=a^{q^m}-a$, $L(x)=-2x$ and observing that $\delta^{q^m}=-\delta$, for all $a$, using again the result of Yuan and Ding~\cite{YD14}, we infer that $H_a$ is a permutation.
\end{proof}

\section{P$\wp$N functions, difference sets, and designs}
\label{diffset}

Let $G$ be a group of order $\mu\nu$ with a normal subgroup $N$ of order $\nu$. A $k$-subset $D$ of $G$ 
is called a {\it $(\mu,\nu,k,\lambda)$ relative difference set, relative to $N$}, if every element of $G\setminus N$ 
can be written as a difference of two elements in $D$ in exactly $\lambda$ ways, and there is no representation 
of this form for any nonzero element in $N$. A relative difference set is a generalization of a {\it $(v,k,\lambda)$ 
difference set}, which we can see as a relative difference set with trivial $N = \{0\}$ ($\nu = 1$, $v = \mu\nu = \mu$).

The set of all translates $\{D+a\,:\,a\in G\}$ of a subset $D$ in a group $G$, is called the {\it development}
of $D$. If $D$ is a $(\mu,\nu,k,\lambda)$ relative difference set, then the development of $D$ gives rise to
a {\it divisible design} (the points are the elements of the group, the blocks are the translates $D+a$).
We refer to \cite{jung82} for further information on divisible designs.
%Every block has $k$ points, every point is on $k$ blocks, there is an equivalence relation on the set of points 
%with $\mu$ equivalence classes all of size $\nu$, any two distinct points which are not equivalent
%are incident with exactly $\lambda$ blocks, and for any two distinct equivalent points, there is no block containing 
%both, \cite{jung82}. 
If $D$ is a difference set ($\nu = 1$), then we obtain a symmetric design, where every two distinct points
are simultaneously on exactly $\lambda$ blocks, and every two blocks intersect in exactly $\lambda$ points.
%The group G {\displaystyle G} G acts as an automorphism group of the design. It is sharply transitive on both 
%points and blocks.

The definition of a bent function via balanced conventional derivatives has an equivalent version in terms of relative
difference sets. A function $F:\F_{p^n}\rightarrow\F_{p^m}$ is bent if and only if the graph of $F$,
$\mathcal{G}_F = \{(x,F(x))\,:\,x\in\F_{p^n}\}$ is a $(p^n,p^m,p^n,p^{n-m})$ relative difference set in 
$\F_{p^n}\times\F_{p^m}$ relative to $\{0\}\times\F_{p^m}$.

Bent functions hence give also rise to divisible designs. We remark that for a planar function, the divisible
design can be transformed into a projective plane, see \cite[Section 3.3]{pott16}.

The objective of this section is to relate P$\wp$N functions with the quasigroup difference sets. We then analyze the incidence structure obtained from the development of the (quasigroup) difference set,
which we get from a P$\wp$N function for a linear orthomorphism. The developments of these sets exhibit then properties comparable to those of designs.

\subsection{P$\wp$N functions and quasigroup difference sets}

For a permutation $\wp$ of $\F_{p^m}$, we define the binary operation $+_\wp$ on the set $\F_{p^n}\times\F_{p^m}$ as
\[ (x_1,y_1) +_\wp (x_2,y_2) = (x_1+x_2,y_1 + \wp(y_2)). \]
Recall that a set $Q$ with a binary operation $\star$ is called a {\it quasigroup}, if the equations $a\star x = b$ and $y\star a = b$ 
have a unique solution for all $a,b$ in $Q$. Consequently, $(\F_{p^n}\times\F_{p^m},+_\wp)$ becomes a quasigroup. 

We extend the definition of a difference set in finite groups to finite quasigroups. Note that for calculating the set of all differences
in a subset $D$ of $(Q,\star)$, it is only required that  $(Q,\star)$ is a quasigroup. 
\begin{definition}
	Let $D$ be a $k$-subset of a quasigroup $Q$ of order $v$. Then $D$ is called a $(v,k,\lambda)$ 
	quasigroup difference set in $Q$, if every element of $Q$ can be written as a difference of two elements in $D$ 
	in exactly $\lambda$ ways.
\end{definition}
\begin{remark}
	There is a slight difference between the definition of a difference set in a group and in a quasigroup.
	For a difference set $D$ in a group $G$, all but the zero element (which not necessarily exists in a
	quasigroup) can be written as a difference of elements in $D$ in $\lambda$ ways. Clearly, in a group,
	the element $0$ can be written as a difference of two elements of $D$ in exactly $|D| = k$ ways. 
	As a consequence, whereas the parameters of a difference set in a group satisfy $k(k-1) = (v-1)\lambda$,
	for a quasigroup difference set we have $k^2 = v\lambda$.
\end{remark}
We can use any permutation~$\wp$ to define a quasigroup, as given above. Our objective is to relate P$\wp$N functions
to difference sets in such a quasigroup. By this we intend to point to some similarities to perfect nonlinear functions.
In the light of Proposition~\ref{Fwpper}, we restrict ourselves to $\wp$ being an orthomorphism.
%(At least this is so if $m=n$. Hm?!)
%
\begin{theorem}
	Let $F:\F_{p^n}\rightarrow\F_{p^m}$ be a P$\wp$N function. Then the graph $\mathcal{G}_F$ 
	of $F$ is a $(p^{n+m},p^n,p^{n-m})$ difference set in $(\F_{p^n}\times\F_{p^m},+_\wp)$.
	In particular, if $m=n$, then $\mathcal{G}_F$ is a $(p^{2n},p^n,1)$ difference set in 
	$(\F_{p^n}\times\F_{p^n},+_\wp)$.
\end{theorem}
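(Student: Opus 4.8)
The plan is to show directly that, for each element $(a,b)\in\F_{p^n}\times\F_{p^m}$, the number of ordered pairs of points of $\mathcal{G}_F$ whose quasigroup difference equals $(a,b)$ is exactly $p^{n-m}$. First I would fix the bookkeeping: the quasigroup $Q=(\F_{p^n}\times\F_{p^m},+_\wp)$ has order $v=p^{n+m}$, and $\mathcal{G}_F$ has $k=p^n$ elements. Since $F$ is P$\wp$N, Proposition~\ref{Fwpper} guarantees that $\wp$ is an orthomorphism and that $F$ is balanced, so $\mathcal{G}_F$ is genuinely a graph over all of $\F_{p^n}$. The target count $\lambda=p^{n-m}$ is then forced by $k^2=v\lambda$, and my task reduces to verifying that this $\lambda$ is attained uniformly.

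The next step is to pin down what ``difference'' means in $Q$. Since $+_\wp$ is not a group operation, I interpret the difference of $d_1$ and $d_2$ as the unique solution $g$ of $g+_\wp d_2 = d_1$, which is well defined precisely because $Q$ is a quasigroup. Writing $d_1=(x_1,y_1)$, $d_2=(x_2,y_2)$ and unravelling $+_\wp$, one reads off
\[ g=\bigl(x_1-x_2,\;y_1-\wp(y_2)\bigr). \]
Specialising to $d_1=(x_1,F(x_1))$ and $d_2=(x_2,F(x_2))$ in $\mathcal{G}_F$, the difference equals $(a,b)$ iff $x_1-x_2=a$ and $F(x_1)-\wp(F(x_2))=b$. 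Substituting $x_1=x_2+a$ from the first equation into the second collapses the system to the single equation
\[ F(x_2+a)-\wp(F(x_2)) = {}_\wp D_a F(x_2) = b. \]
Thus the number of representations of $(a,b)$ as a difference of two points of $\mathcal{G}_F$ equals the number of solutions $x_2\in\F_{p^n}$ of ${}_\wp D_a F(x_2)=b$. By the defining property of a P$\wp$N function, ${}_\wp D_a F$ is balanced for \emph{every} $a\in\F_{p^n}$, so each $b\in\F_{p^m}$ is attained exactly $p^n/p^m=p^{n-m}$ times, uniformly in $a$ and $b$. Hence every element of $Q$ is a difference of two elements of $\mathcal{G}_F$ in exactly $\lambda=p^{n-m}$ ways, giving the claimed $(p^{n+m},p^n,p^{n-m})$ difference set, and $\lambda=1$ when $m=n$.

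The step needing the most care is the ``zero'' direction $a=0$, which is exactly where a quasigroup difference set departs from a group difference set: here balancedness of ${}_\wp D_0 F=(x-\wp(x))\circ F$ (already exploited in Proposition~\ref{Fwpper}) shows that even $(0,b)$-type elements are covered $p^{n-m}$ times rather than $k$ times, so the count stays uniform over all of $Q$ and the relation $k^2=v\lambda$ — not $k(k-1)=(v-1)\lambda$ — is the correct one. I would also note that, had I instead defined the difference by the left division $d_2+_\wp g=d_1$, the second coordinate would read $\wp^{-1}(y_1-y_2)$ and the reduction to the $\wp$-differential would not be immediate; selecting the right-division convention is precisely what makes ${}_\wp D_a F$ appear on the nose, so this is the one subtlety worth flagging explicitly rather than a genuine obstacle.
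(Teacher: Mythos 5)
Your proof is correct and follows essentially the same route as the paper's: identify the quasigroup difference of two graph points as $\bigl(x_1-x_2,\,F(x_1)-\wp(F(x_2))\bigr)$ via the right-division convention, and reduce the count for each $(a,b)$ to the balancedness of ${}_\wp D_aF$, giving exactly $p^{n-m}$ representations. Your added remarks on the $a=0$ case and the relation $k^2=v\lambda$ match the paper's own remark distinguishing quasigroup difference sets from group difference sets.
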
	
\begin{proof} First note that in a quasigroup $(Q,\star)$, $a$ is the difference of $b$ and $c$, if $c$ is the unique
solution of $a\star x = b$. In our quasigroup, $(a,b)$ is then the difference of $(x_1,y_1)$ and $(x_2,y_2)$
if $a+x_2 = x_1$ and $b + \wp(y_2) = y_1$, i.e., $a = x_1-x_2$ and $b = y_1-\wp(y_2)$.

For a fixed $(a,b)\in\F_{p^n}\times\F_{p^m}$, we determine the number of possibilities to write 
$(a,b)$ as a difference (in the quasigroup) of two distinct elements in $\mathcal{G}_F$, i.e., 
$(a,b) = (x_1-x_2, F(x_1)-\wp(F(x_2)))$. This is exactly the number of $x$, such that $(a,b) = (a,F(x+a)-\wp(F(x)))$.
Since $F$ is a P$\wp$N function, this number is exactly $p^{n-m}$ (for all $a\in\F_{p^n}$ and $b\in\F_{p^m}$).
\end{proof}
%
%
%{\it Proof.} First note that in a quasigroup $(Q,\star)$, $a$ is the difference of $b$ and $c$, if $c$ is the unique
%solution of $a\star x = b$. In our quasigroup, $(a,b)$ is then the difference of $(x_1,y_1)$ and $(x_2,y_2)$
%if $a+x_2 = x_1$ and $b + cy_2 = y_1$, i.e., $a = x_1-x_2$ and $b = y_1-cy_2$.
%
%For a fixed $(a,b)\in\F_{p^n}\times\F_{p^m}$, $a\ne 0$, we determine the number of possibilities to write 
%$(a,b)$ as a difference (in the quasigroup) of two distinct elements in $\mathcal{G}_F$, i.e., 
%$(a,b) = (x_1-x_2, F(x_1)-cF(x_2))$. This is exactly the number of $x$, such that $(a,b) = (a,F(x+a)-cF(x))$.
%Since $F$ is a PcN function, this number is exactly $p^{n-m}$ (for all nonzero $a\in\F_{p^n}$ and $b\in\F_{p^m}$).
%As we consider only differences between distinct elements of the graph $\mathcal{G}_F$, elements from 
%$\{0\}\times\F_{p^m}$ do not appear as a difference.
%\hfill$\Box$
%From the proof of Theorem \ref{RDS}, we know that every $(a,b)\in \F_{p^n}\times\F_{p^m}$, $a\ne 0$,
%is a difference (in the quasigroup - shall we call it $c$-difference?) of two (distinct) elements of $\mathcal{G}_F$
%in exactly $\lambda = p^{n-m}$ ways. The $c$-difference of $(x,F(x))$ with itself, is $(0,(1-c)F(x))$.
%Since $F$ is strictly PcN, hence $F$ is balanced, every element $(0,b)$ occurs exactly $\lambda = p^{n-m}$
%times. \hfill$\Box$\\[.5em]
%

%

\subsection{PcN functions and designs}

Whereas the difference set is defined with the group operation, for the design (which is simply an incidence
relation between blocks and points) constructed from the difference set, the structure of the group is not relevant.
We attempt to assign a class of incidence structure to the quasigroup difference sets obtained with P$\wp$N functions. 
As one may expect, the independence of the number of solutions for $_\wp D_aF(x) = b$ from $a$ 
and $b$, again transfers to a property (in terms of $\lambda$) that pairs of points of the incidence structure satisfy
simultaneously. For simplicity, we consider the case of linearized  orthomorphisms $\wp(x)$. There are several examples for such functions known in the case $\wp(x) = -x$, see the list in Appendix~B. 

%as example we consider the case of $\wp(x) = -x$, i.e., the case of PcN functions with 
%$c = -1$. There are several examples for such functions known, see the list in the appendix.

The development of the graph of $F$ in the quasigroup $(\F_{p^n}\times\F_{p^m}, +_\wp)$ consists of the
sets $(x,F(x)) +_\wp (u,v) = (x+u,F(x) + \wp (v))$,  $x\in \F_{p^n}$. Since $\wp (v)$ runs through $\F_{p^m}$ if $v$ does,
this equals the set of the conventional translates $(x,F(x)) + (u,v)$.
\begin{theorem}
\label{designThm}
Let  %$p$ be an odd prime, and 
$F:\F_{p^n}\rightarrow\F_{p^m}$ be a P$\wp$N function, where $\wp$ is linear. 
Then the development of the graph of $F$ yields an incidence structure with $v = p^n\times p^m$
points and $v$ blocks, with the following properties$:$
\begin{itemize}
\item[(i)] Every block contains $k = p^n$ points (elements of $\F_{p^n}\times\F_{p^m}$), and every 
point is on exactly $k$ blocks.
\item[(ii)] The block set separates into a class of $r = p^n$ single blocks $B_1,\ldots,B_r$ and a class 
of $2s = (p^m-1)p^n$ paired blocks $B_{r+1},\bar{B}_{r+1}, \ldots, B_{r+s},\bar{B}_{r+s}$.
For every pair of points, the number of blocks among $B_1, \ldots, B_r$ which contain both elements 
simultaneously, and the number of pairs of blocks among $B_{r+1},\bar{B}_{r+1}, \ldots, 
B_{r+s},\bar{B}_{r+s}$, for which one element of the pair of points is in $B_{r+i}$ and the other one is in 
$\bar{B}_{r+i}$, always add to $ p^{n-m}$. 
\item[(iii)] The point set separates into a class of $r$ single points $P_1,\ldots,P_r$ and a class 
of $2s$ paired points $P_{r+1},\bar{P}_{r+1}, \ldots, P_{r+s},\bar{P}_{r+s}$.
For every pair of blocks, the number of points among $P_1, \ldots, P_r$, which are on both blocks, and the 
number of pairs of points among $P_{r+1},\bar{P}_{r+1}, \ldots, P_{r+s},\bar{P}_{r+s}$, for which 
one point of the pair is in one of the two blocks, the other one is in the other, always add to $p^{n-m}$. 
\end{itemize}
\end{theorem}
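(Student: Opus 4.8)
The plan is to realize every block in its ordinary‑translate form. Since $\wp$ is a bijection, as observed just before the statement the development in $(\F_{p^n}\times\F_{p^m},+_\wp)$ consists exactly of the conventional translates $B_{(u,w)}=\mathcal{G}_F+(u,w)$, $(u,w)\in\F_{p^n}\times\F_{p^m}$. A point $(X,Y)$ lies on $B_{(u,w)}$ iff $(X-u,Y-w)\in\mathcal{G}_F$, that is, iff
\[ w = Y - F(X-u). \]
From this, item (i) is immediate: each block is the image of $\mathcal{G}_F$ under a bijection, hence has $k=p^n$ points; and for a fixed point $(X,Y)$ every $u\in\F_{p^n}$ determines a unique $w$, so $(X,Y)$ lies on exactly $p^n$ blocks. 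The totals $v=p^{n}\times p^{m}$ points and $p^{n+m}$ blocks are clear.

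Next I would isolate the engine of the argument. For a pair of points $P=(X_1,Y_1)$, $Q=(X_2,Y_2)$ write $w_P(u)=Y_1-F(X_1-u)$ and $w_Q(u)=Y_2-F(X_2-u)$ for the second coordinates of the unique blocks with shift $u$ through $P$ and $Q$. The key identity is
\[ \#\{u\in\F_{p^n} : w_P(u)=\wp(w_Q(u))\} = p^{n-m}. \]
To prove it, substitute $x=X_2-u$ and use linearity of $\wp$: the equation $w_P(u)=\wp(w_Q(u))$ becomes $F(x+a)-\wp(F(x))=Y_1-\wp(Y_2)$ with $a=X_1-X_2$. As $u$ ranges over $\F_{p^n}$, $x$ ranges over $\F_{p^n}$ bijectively, so the count equals the number of solutions of ${}_\wp D_aF(x)=Y_1-\wp(Y_2)$, which is $p^{n-m}$ because $F$ is P$\wp$N.

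For (ii) I would pair blocks within each shift $u$ by the action of $\wp$ on the second coordinate, declaring $B_{(u,w)}$ and $B_{(u,\wp(w))}$ partners. As $\wp$ is a linear orthomorphism its only fixed point is $0$, so the blocks $B_{(u,0)}$, $u\in\F_{p^n}$, are the $r=p^n$ single blocks, and the remaining $(p^m-1)p^n$ blocks split into the $2$-cycles of $\wp$, giving $s=(p^m-1)p^n/2$ pairs. Then I split the solution set of the key identity according to $w_Q(u)$. If $w_Q(u)=0$ then $w_P(u)=\wp(0)=0$, so $B_{(u,0)}$ is a single block through both $P$ and $Q$, and conversely a single block through both forces $w_P(u)=w_Q(u)=0$. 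If $w_Q(u)\neq 0$ then $w_P(u)=\wp(w_Q(u))\neq w_Q(u)$, so $P$ and $Q$ lie in the two distinct partners $B_{(u,\wp(w_Q(u)))}$ and $B_{(u,w_Q(u))}$, i.e. the pair $\{B_{(u,w_Q(u))},B_{(u,\wp(w_Q(u)))}\}$ is split by $\{P,Q\}$; since $\wp=\wp^{-1}$, both orderings of $P,Q$ give the same condition, so split pairs correspond bijectively to these $u$. Adding the two contributions gives exactly $\#\{u:w_P(u)=\wp(w_Q(u))\}=p^{n-m}$, which is (ii).

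Item (iii) is the point–block dual: for two blocks $\mathcal{G}_F+(u_1,w_1)$ and $\mathcal{G}_F+(u_2,w_2)$, writing $Y^{(i)}(X)=F(X-u_i)+w_i$ for the unique point of the $i$‑th block with first coordinate $X$, the same substitution yields $\#\{X: Y^{(1)}(X)=\wp(Y^{(2)}(X))\}=p^{n-m}$; pairing points by $(X,Y)\leftrightarrow(X,\wp(Y))$ (single points $(X,0)$) and decomposing exactly as above gives (iii). I would carry this out verbatim, exploiting the symmetry between the point set and block set of the development. \emph{The main obstacle is precisely the pairing step}: the clean partition of the non‑fixed blocks (resp. points) into unordered pairs $B_{r+i},\bar B_{r+i}$, together with the symmetric reading ``one point in $B_{r+i}$, the other in $\bar B_{r+i}$,'' requires $B_{(u,w)}\leftrightarrow B_{(u,\wp(w))}$ to have order two, i.e. $\wp^2=\mathrm{id}$ with unique fixed point $0$ — which is exactly the highlighted case $\wp(x)=-x$. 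Checking that this pairing is well defined, that distinct $u$ give distinct pairs, and that no split pair or single block is double‑counted, is the delicate bookkeeping on which the whole argument rests.
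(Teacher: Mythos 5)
You follow the paper's proof essentially verbatim: the paper likewise realizes the development as the ordinary translates $\mathcal{G}_F+(u,v)$, proves (i) by the uniqueness of the translate through a given point, and derives (ii) and (iii) from exactly your key identity (its conditions \eqref{eq:uv} and \eqref{eq:xy} reduce, via $a=d_1-d_2$, resp.\ $a=u_2-u_1$, to counting solutions of $F(x+a)-\wp(F(x))=b$); its pairings $\mathcal{G}_F+(u,v)\leftrightarrow\mathcal{G}_F+(u,\wp^{-1}(v))$ and $(x,y)\leftrightarrow(x,\wp^{-1}(y))$ are your pairings read backwards. As for the obstacle you flag at the end: you have not missed an idea that the paper supplies. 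The paper's proof only checks that $v\neq\wp^{-1}(v)$ for $v\neq0$, i.e.\ that each pair has two distinct members; it never verifies that the $p^n(p^m-1)/2$ pairs are mutually disjoint, which is precisely your point. Your diagnosis is correct and can even be sharpened: for a linear orthomorphism, $\wp^2=\mathrm{id}$ forces $\wp=-\mathrm{id}$, since $(\wp-\mathrm{id})(\wp+\mathrm{id})=0$ and $\wp-\mathrm{id}$ is invertible; hence the unordered-pair reading of (ii) and (iii) is available exactly for $\wp(x)=-x$ in odd characteristic, and never in characteristic $2$. For a general linear orthomorphism the statement has to be read with ordered pairs $\bigl(B_{(u,\wp(w))},B_{(u,w)}\bigr)$, resp.\ $\bigl((x,\wp(y)),(x,y)\bigr)$: your key identity then gives, for each ordered pair of points (resp.\ blocks), that the single-block count plus the split-ordered-pair count equals $p^{n-m}$, which is what the paper's computation actually establishes. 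Modulo this imprecision, which is shared with the published proof, your proposal is correct and is the same argument.
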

\begin{proof}
We first show  $(i)$. By definition, every block has $k$ points. For a given point $(x_1,y_1) \in \F_{p^n}\times\F_{p^m}$, we pick an
element $(x,F(x))$ of the graph of~$F$. Then $(x_1,y_1) = (x,F(x)) + (u,v)$ for a unique 
$(u,v) \in \F_{p^n}\times\F_{p^m}$, and $(x_1,y_1)$ is a point of $\mathcal{G}_F + (u,v)$. Since there are 
$k$ choices for an element of the graph, $(x_1,y_1)$ is on $k$ blocks.

Next, we look at $(ii)$.
We now divide the blocks into two classes. The first class consists of the $p^n$ blocks of the form
$\mathcal{G}_F + (u,0)$, $u\in\F_{p^n}$. In the second class, the set of the remaining blocks, we form 
$p^n(p^m-1)/2$ pairs of the form $\mathcal{G}_F + (u,v)$, $ \mathcal{G}_F + (u,\wp^{-1}(v))$. Note that $\wp^{-1}(v)=v$ if and only if $\wp(v)=v$, i.e., $-\wp(v)+v=0$. Since  $-\wp(v)+v$ is a permutation of $\F_{p^m}$, this holds if and only if $v=0$. Hence, $(u,v)=(u,\wp^{-1}(v))$ happens if and only if  $(u,v)=(u,0)$. 

Suppose that $(x_1,y_1)\in \mathcal{G}_F + (u,v)$ and $(x_2,y_2)\in \mathcal{G}_F +(u,\wp^{-1}(v))$. That is, $(x_1,y_1)  = (d_1,F(d_1)) + (u,v) $ and $(x_2,y_2)  = (d_2,F(d_2))+(u,\wp^{-1}(v)) $ for some $d_1,d_2 \in \F_{p^n}$. In other words, by the linearity of $\wp$ we have
\begin{align}\label{eq:uv}
u&=x_1-d_1=x_2-d_2, \; \text{and} \\ \nonumber
 v&=y_1-F(d_1)=\wp(y_2)-\wp(F(d_2)).
\end{align}
Then by setting $a=d_1-d_2$, we have
\begin{align*}
(x_1,y_1)-_\wp (x_2,y_2)&=(x_1-x_2, y_1-\wp(y_2))=(a, F(a+d_2)-\wp(F(d_2))).
\end{align*} 
Since $F$ is P$\wp$N, for $a\in \F_{p^n}$ the equation $F(a+d_2)-\wp(F(d_2))=b$ has $p^{n-m}$ solutions. In particular, the number $d_2$, equivalently, the number of $u$, satisfying Equation \eqref{eq:uv} is $p^{n-m}$. As we remarked, if $v=0$, then the pairs lie in the same block $\mathcal{G}_F + (u,0)$; otherwise they lie in the distinct blocks, namely, $\mathcal{G}_F + (u,v)$, $ \mathcal{G}_F + (u,\wp^{-1}(v))$. 

Finally, we show $(iii)$.
The class of single points is given by the points $(x,0)$, $x\in\F_{p^n}$. The remaining points form $s$ pairs $(x,y)$, $(x,\wp^{-1}(y))$. Similarly, note that $\wp^{-1}(y)=y$ if and only if $y=0$, i.e., for nonzero $y\in \F_{p^n} $, the pairs  $(x,y)$, $(x,\wp^{-1}(y))$ are distinct.

Fix arbitrarily two (distinct) blocks $\mathcal{G}_F + (u_1,v_1)$, $\mathcal{G}_F + (u_2,v_2)$. We are interested in 
the number of $(x,y)\in\F_{p^n}\times\F_{p^m}$, such that $(x,y)$ is in $\mathcal{G}_F + (u_1,v_1)$ and $(x,\wp^{-1}(y))$
is in $\mathcal{G}_F + (u_2,v_2)$. By the linearity of $\wp$, this is equivalent to
\begin{align}\label{eq:xy}
x & = d_1 + u_1 = d_2 + u_2, \;\mbox{and} \\ \nonumber
y & = F(d_1) + v_1 = \wp(F(d_2)) +\wp( v_2)
\end{align}
for some $d_1,d_2\in\F_{p^n}$. Set $a=u_2-u_1$ and $b=\wp( v_2)-v_1$. Hence, we are looking for the number of elements $d_2$ satisfying $F(d_2+a) - \wp(F(d_2))=b$. Since $F$ is P$\wp$N, the equation has $p^{n-m}$ solutions, i.e., there are $p^{n-m}$ elements $d_2$ satisfying the equation. In other words, the number of elements $x\in\F_{p^n}$ satisfying Equation~\eqref{eq:xy} is $p^{n-m}$. If $y\neq 0$, i.e., $F(d_2) + v_2\neq 0$, then the pairs $(x,y)$, $(x,\wp^{-1}(y))$ lie in the distinct blocks $\mathcal{G}_F + (u_1,v_1)$, $\mathcal{G}_F + (u_2,v_2)$; otherwise $(x,0)$ lies in both.
\end{proof}
\begin{remark}
As for the conventional symmetric designs, by the properties (i), (ii), (iii) in Theorem~\textup{\ref{designThm}}, the dual of the 
incidence structure, i.e., the incidence structure obtained by changing the roles of points and blocks, has the 
same properties.
\end{remark}

\begin{remark}
A design (which we obtain with the development of a difference set in a group) is an incidence structure as
given in Theorem~\textup{\ref{designThm}}, for which all blocks (points) belong to the first class.
\end{remark}

\section{Equivalence for P$\wp$N functions}
\label{equiv}

Clearly, the $P\wp$N property is not invariant under the classical (extended) affine equivalence, using automorphisms of the
elementary abelian group. In order to define equivalence between P$\wp$N functions, one apparently has to look at the automorphism 
group corresponding to the quasigroup operation. %Two functions $F_1, F_2: \F_{p^n}\rightarrow\F_{p^m}$ are  {\it $\wp$-affine equivalent}, if $F_2 = \mathcal{L}_2(F_1(\mathcal{L}_1(x)))$, where $\mathcal{L}_1$ is a linearized permutation of 
Two functions $F_1, F_2: \F_{p^n}\rightarrow\F_{p^m}$ are  {\it $\wp$-affine equivalent}, if $F_2 = \mathcal{A}_2(F_1(\mathcal{A}_1(x)))$, where 
$\mathcal{A}_i=\mathcal{L}_i+\alpha_i$, $\alpha_1\in\F_{p^n}$, $\mathcal{L}_1$ is a linearized permutation of 
$\F_{p^n}$ (as for conventional affine equivalence), $\alpha_2\in\F_{p^m}$ and $\mathcal{L}_2$ is a permutation of $\F_{p^m}$ such that
$\mathcal{L}_2(y_1 + \wp(y_2)) = \mathcal{L}_2(y_1) + \wp(\mathcal{L}_2(y_2))$ for all $y_1,y_2\in\F_{p^m}$.
Note that, more precisely, $\mathcal{A}_2=\mathcal{L}_2+\wp(a)$ for some $a\in\F_{p^m}$. Using that $\wp$ is a permutation, with $\wp(a) = \alpha_2$,
we can write $\mathcal{A}_2$ as above.
The graphs of $F_1$, $F_2$ are then equivalent as subsets of the quasigroup $(\F_{p^n}\times\F_{p^m},+_\wp)$.

If $\mathcal{A}_1,\,\mathcal{A}_2$ are linearized, we call this equivalence the {\it $\wp$-linear equivalence}.
\begin{lemma}
\label{equilem}
Let $\wp$ be a permutation of $\F_{p^m}$ (an orthomorphism) with $\wp(0) = 0$, and let $\mathcal{L}$ be a permutation of $\F_{p^m}$,
such that $\mathcal{L}(y_1 + \wp(y_2)) = \mathcal{L}(y_1) + \wp(\mathcal{L}(y_2))$ for all $y_1,y_2\in\F_{p^m}$. Then $\mathcal{L}$ 
is a linear permutation.
\end{lemma}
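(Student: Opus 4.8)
The plan is to exploit the functional equation $\mathcal{L}(y_1 + \wp(y_2)) = \mathcal{L}(y_1) + \wp(\mathcal{L}(y_2))$ by making two carefully chosen substitutions and then invoking the surjectivity of $\wp$. First I would set $y_2 = 0$. Since $\wp(0) = 0$, the left-hand side becomes $\mathcal{L}(y_1)$, so the equation reads $\mathcal{L}(y_1) = \mathcal{L}(y_1) + \wp(\mathcal{L}(0))$, forcing $\wp(\mathcal{L}(0)) = 0$. Because $\wp$ is a permutation with $\wp(0) = 0$, this yields $\mathcal{L}(0) = 0$.

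Next I would set $y_1 = 0$. Using $\mathcal{L}(0) = 0$, the equation collapses to $\mathcal{L}(\wp(y_2)) = \wp(\mathcal{L}(y_2))$ for all $y_2$; that is, $\mathcal{L}$ commutes with $\wp$, namely $\mathcal{L} \circ \wp = \wp \circ \mathcal{L}$. Substituting this identity back into the right-hand side of the original equation replaces $\wp(\mathcal{L}(y_2))$ by $\mathcal{L}(\wp(y_2))$, so that $\mathcal{L}(y_1 + \wp(y_2)) = \mathcal{L}(y_1) + \mathcal{L}(\wp(y_2))$ for all $y_1, y_2 \in \F_{p^m}$.

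Now I would use the surjectivity of $\wp$: as $y_2$ ranges over $\F_{p^m}$, the element $z = \wp(y_2)$ ranges over all of $\F_{p^m}$. Hence the previous relation is equivalent to $\mathcal{L}(y_1 + z) = \mathcal{L}(y_1) + \mathcal{L}(z)$ for all $y_1, z \in \F_{p^m}$, i.e., $\mathcal{L}$ is additive. An additive map on $\F_{p^m}$ is automatically $\F_p$-linear, since additivity gives $\mathcal{L}(n y) = n\mathcal{L}(y)$ for every prime-field scalar $n$, and such a map is a linearized polynomial; being in addition a permutation, $\mathcal{L}$ is a linear permutation, as claimed.

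There is no serious obstacle here: the whole argument hinges on recognizing that the $y_1 = 0$ substitution produces the commutation relation $\mathcal{L} \circ \wp = \wp \circ \mathcal{L}$, after which surjectivity of $\wp$ immediately converts this twisted additivity into genuine additivity. The only point requiring a routine remark is that additivity over the prime field already suffices for $\mathcal{L}$ to be linearized, so that neither the full orthomorphism property of $\wp$ nor any field-multiplication structure is actually needed beyond $\wp$ being a bijection fixing $0$.
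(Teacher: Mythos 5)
Your proof is correct and follows essentially the same route as the paper's: establish $\mathcal{L}(0)=0$, set $y_1=0$ to obtain the commutation $\mathcal{L}\circ\wp=\wp\circ\mathcal{L}$, substitute back, and use surjectivity of $\wp$ to convert the twisted additivity into genuine additivity. The only (cosmetic) difference is that you derive $\mathcal{L}(0)=0$ by setting $y_2=0$ directly, whereas the paper argues via the unique $y_2$ with $\mathcal{L}(y_2)=0$; both are valid.
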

\begin{proof}
Let $\mathcal{L}(y_1 + \wp(y_2)) = \mathcal{L}(y_1) + \wp(\mathcal{L}(y_2))$. For the unique
$y_2$ for which $\mathcal{L}(y_2) = 0$ we have $y_1 + \wp(y_2) = y_1$, hence $\wp(y_2) = 0$, i.e., 
$y_2 = 0$ and $\mathcal{L}(0) = 0$. Setting $y_1 = 0$, we then infer that 
$\mathcal{L}(\wp(y_2)) = \wp(\mathcal{L}(y_2))$ for every $y_2$. Consequently, 
$\mathcal{L}(y_1 + \wp(y_2)) = \mathcal{L}(y_1) + \wp(\mathcal{L}(y_2)) = \mathcal{L}(y_1) + \mathcal{L}(\wp(y_2))$
for all $y_1,y_2\in\F_{p^m}$. Since we can write every $y_3$ as $\wp(y_2)$, we infer that 
$\mathcal{L}(y_1+y_3) = \mathcal{L}(y_1) + \mathcal{L}(y_3)$ for all $y_1,y_3\in\F_{p^m}$.
\end{proof}

With Lemma \ref{equilem} we can completely describe $\wp$-affine (respectively linear) equivalence.
\begin{theorem}
$F_1,F_2:\F_{p^n}\rightarrow\F_{p^m}$ are $\wp$-affine equivalent if and only if $F_2(x) = \mathcal{A}_2(F_1(\mathcal{A}_1(x)))$, where $\mathcal{A}_i=\mathcal{L}_i+\alpha_i$, $\alpha_1\in\F_{p^n}$, $\alpha_2\in\F_{p^m}$, 
 $\mathcal{L}_1$  a linearized permutation of $\F_{p^n}$ and  $\mathcal{L}_2$ a linearized permutation of $\F_{p^m}$ such that 
$\mathcal{L}_2(\wp(x)) = \wp(\mathcal{L}_2(x))$ for all $x\in\F_{p^m}$.
In particular, $F_2(x) = \mathcal{A}_2(F_1(\mathcal{A}_1(x)))$ is P$\wp$N if and only if $F_1$ is P$\wp$N.
\end{theorem}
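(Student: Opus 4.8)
The plan is to prove the two assertions in turn. The exact description of $\wp$-affine equivalence is essentially a repackaging of Lemma~\ref{equilem}, while the invariance of the P$\wp$N property I would obtain by directly computing the $\wp$-differential of $F_2$ and reducing it to that of $F_1$. Throughout I use the Remark allowing us to assume $\wp(0)=0$, so that Lemma~\ref{equilem} applies to $\mathcal{L}_2$.

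For the characterization, the forward direction is immediate: if $F_1,F_2$ are $\wp$-affine equivalent, then $\mathcal{L}_2$ is a permutation satisfying $\mathcal{L}_2(y_1+\wp(y_2))=\mathcal{L}_2(y_1)+\wp(\mathcal{L}_2(y_2))$, and Lemma~\ref{equilem} shows that $\mathcal{L}_2$ is linearized; setting $y_1=0$ and using $\mathcal{L}_2(0)=0$ (both established in the proof of that lemma) yields the commuting relation $\mathcal{L}_2(\wp(x))=\wp(\mathcal{L}_2(x))$. For the converse, if $\mathcal{L}_2$ is linearized and commutes with $\wp$, then $\mathcal{L}_2(y_1+\wp(y_2))=\mathcal{L}_2(y_1)+\mathcal{L}_2(\wp(y_2))=\mathcal{L}_2(y_1)+\wp(\mathcal{L}_2(y_2))$, which is exactly the defining condition. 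This settles the first equivalence.

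For the ``in particular'' statement I would split the equivalence into three elementary moves and check each preserves balancedness of the $\wp$-differential. Writing $z=\mathcal{A}_1(x)$ and $a'=\mathcal{L}_1(a)$, both of which range over all of $\F_{p^n}$, the input affine part contributes $F_1(z+a')-\wp(F_1(z))$, the same differential family, so $\mathcal{A}_1$ is harmless. Composing on the left with $\mathcal{L}_2$ and using linearity together with the commuting relation gives $\mathcal{L}_2(F_1(z+a'))-\wp(\mathcal{L}_2(F_1(z)))=\mathcal{L}_2(F_1(z+a'))-\mathcal{L}_2(\wp(F_1(z)))=\mathcal{L}_2\big(F_1(z+a')-\wp(F_1(z))\big)$, which is balanced iff $F_1(z+a')-\wp(F_1(z))$ is, since $\mathcal{L}_2$ is a permutation. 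Thus, up to the additive constant $\alpha_2$, the $\wp$-differential of $F_2$ is $\mathcal{L}_2$ applied to that of $F_1$.

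The genuinely delicate point, which I expect to be the main obstacle, is the additive output constant: after expansion one is confronted with the outer term $\wp(\mathcal{L}_2(F_1(z))+\alpha_2)$, which does not split off $\alpha_2$ additively because $\wp$ need not be additive. When $\wp$ is linear -- the case relevant to the designs of Section~\ref{diffset} -- this is transparent, since $\wp(\mathcal{L}_2(F_1(z))+\alpha_2)=\wp(\mathcal{L}_2(F_1(z)))+\wp(\alpha_2)$, so the $\wp$-differential of $F_2$ equals $\mathcal{L}_2\big(F_1(z+a')-\wp(F_1(z))\big)+(\alpha_2-\wp(\alpha_2))$, a constant translate of a balanced function and hence balanced. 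For general $\wp$ I would absorb the constant using the earlier Remark that $F$ is P$\wp$N iff it is P$\wp_c$N with $\wp_c=\wp+c$; this is precisely where care is required, since the constant inside $\wp$ effectively conjugates $\wp$ by the translation $y\mapsto y+\alpha_2$. The same localization of the difficulty appears in the quasigroup viewpoint: the map $(x,y)\mapsto(\mathcal{L}_1^{-1}(x),\mathcal{L}_2(y))$ is an automorphism of $(\F_{p^n}\times\F_{p^m},+_\wp)$ carrying $\mathcal{G}_{F_1}$ to $\mathcal{G}_{F_2}$ up to a quasigroup translation, and while automorphisms cleanly preserve quasigroup difference sets, the residual translation by the constant is once more the delicate point to be handled.
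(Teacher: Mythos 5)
Your characterization of $\wp$-affine equivalence is exactly the paper's (implicit) argument: the paper states the theorem as an immediate consequence of Lemma~\ref{equilem}, and your forward/backward directions via that lemma are correct. Your treatment of the ``in particular'' part is also correct, and more detailed than the paper's (which offers no proof), as long as either $\alpha_2=0$ or $\wp$ is linearized: the inner affine substitution and the commuting relation $\mathcal{L}_2(\wp(y))=\wp(\mathcal{L}_2(y))$ do reduce the $\wp$-differential of $F_2$ to $\mathcal{L}_2$ applied to that of $F_1$, and composing with a linearized permutation preserves balancedness.

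The gap is in your handling of the output constant for a general orthomorphism. You propose to absorb $\alpha_2$ via the Remark that $F$ is P$\wp$N iff it is P$\wp_c$N for $\wp_c(y)=\wp(y)+c$; but that Remark only covers adding a constant \emph{outside} $\wp$. Here the expansion produces $\wp\bigl(\mathcal{L}_2(F_1(z))+\alpha_2\bigr)$, i.e.\ the constant sits \emph{inside} $\wp$, and the relevant replacement is the conjugate $y\mapsto\wp(y+\alpha_2)$ (equivalently $\widetilde{\wp}(y)=\wp(y+\alpha_2)-\alpha_2$), which is not of the form $\wp+c$ unless $\wp$ is additive. So the step ``absorb the constant using the Remark'' does not go through, and you correctly sense this (``this is precisely where care is required'') but leave it unresolved; the same unresolved issue reappears in your quasigroup reformulation, since a quasigroup translate $(x,y)\mapsto(x+u,y+\wp(v))$ of a difference set need not be a difference set when $\wp$ is not additive. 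To close the argument you would either need to prove that P$\wp$N is invariant under replacing $\wp$ by its translation conjugates, or restrict the claim about $\alpha_2\neq 0$ to linearized $\wp$ (as the paper effectively does elsewhere, e.g.\ in Theorem~\ref{designThm} and in the sentence following the linear-equivalence discussion). As stated, your proof is complete only for $\wp$-linear equivalence and for linearized $\wp$.
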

We observe that $\wp$-affine equivalence is included in conventional affine equivalence.
\begin{example}
Let $\wp(x) = cx$, then the condition $\wp(\mathcal{L}(x)) = \mathcal{L}(\wp(x))$ reduces to $c\mathcal{L}(x) = \mathcal{L}(cx)$.
If $c$ is not in the prime field $\F_p$, this is a restriction on the linear permutation $\mathcal{L}$.
\end{example}

Recall that for every orthomorphism $\wp$, every linear permutation (respectively linear balanced function) is P$\wp$N.
Clearly, two linear permutations are $\wp$-linear equivalent. Consequently, the linear permutations form one 
$\wp$-linear equivalence class of P$\wp$N functions on $\F_{p^n}$. 
As representative of the equivalence class,
we may choose $F(x) = x$. The quasigroup difference set corresponding to a linear permutation
is hence equivalent as a subset of $(\F_{p^n}\times\F_{p^m},+_\wp)$ to
$\{(x,x),\,:\,x\in\F_{p^n}\}$. 
Note that a similar statement can be made for affine permutations and equivalence under the condition $\wp$ linearized.

On the other hand, two quadratic permutations as P$\wp$N functions, if we just restrict to $\wp(x) = cx$, $c\in\F_p$, $c\ne 1$, are in general
not $\wp$-affine equivalent.

\section{Perspectives for future research}
\label{concl}

In this paper, for the first time, we find a connection between the $c$-differential uniformity (cDU) and combinatorial designs. In particular, we show that the graph of a PcN function   corresponds to a difference set in a quasigroup. Difference sets give rise to symmetric designs, which are known to construct optimal self complementary codes. Some types of designs can be also used in secret sharing and visual cryptography. We extend the PcN function to any orthomorphism $\wp$, not only $x\mapsto cx$, and that enables us to define an equivalence relation among  perfect $\wp$-nonlinear functions. We also provide an idea for a possible extension of the differential attack.

Lately there has been considerable progress in constructing PcN functions, in particular, but not only, in characteristic two. We refer
to the table in the appendix and the corresponding references. In this article we give two examples of P$\wp$N functions for $\wp(x)$ other than $cx$.
It is to be expected that many more classes of PcN ($c\neq 1$; recall that for $c=0$, PcN functions are simply permutations), and more general, P$\wp$N functions can be found with moderate effort, though, perhaps not in the binary case, where there is only {\em one} known nontrivial monomial PcN class (some only for $c=-1$), and about nine polynomials ones (constructed via some switching of a linearized polynomial).
Thus, in odd characteristic, this is somewhat opposite to the situation for planar functions.
In view of the above, at this point, other more general questions should be asked.

To give a complete description of all P$\wp$N functions for some given orthomorphism $\wp$, and in this way to describe (up to equivalence) all
difference sets of this type in the corresponding quasigroup, may be interesting.
Another interesting question is also whether non-linearized P$\wp$N functions exist for all orthomorphisms $\wp$, or whether for some orthomorphisms,
the corresponding quasigroup has only $\{(x,x)\,:\,x\in \F_{p^n}\}$ as a difference set (arising from a graph of a function).

Some questions may arise from the connection to permutation polynomials. Are there interesting permutation polynomials among $_\wp D_a(F(x))$  (equivalence questions would have to be addressed)? Are there (other) properties, which are specific to the permutations $_\wp D_a(F(x))$?

We conclude this article with an observation about a higher order $c$-differential attack, that may help circumvent the key addition non-cancellation in an extension of the differential attack.

We consider a round function ($S$-box) $F$  of a cipher (operating over a finite field of any characteristic~$p$) with a post-whitening key $K_1$. Computing the $c$-differential of $F+K_1$ at $c_1$, we get 
\[
{_{c_1}}D_a(F+K_1)(x)=F(x+a)+K_1-c_1\left( F(x)+K_1\right)=:G(x).
\]
As in the case of higher order differential cryptanalysis,  we continue with another round key $K_2$ and  obtain
\begin{align*}
{_{c_2}}D_b \left(G+K_2\right)(x)
&=G(x+b)-c_2G(x)+(1-c_2)K_2\\
&=F(x+a+b)-c_1 F(x+b)-c_2F(x+a)+c_1c_2 F(x)\\
&\qquad +(1-c_2)(1-c_1) K_1+(1-c_2) K_2.
\end{align*}
Thus if either $c_2=1$ (hence, the second derivative is the classical one), or $c_2\neq 1$ and the round key constants are related by $K_2=-(1-c_1)K_1$, the round keys will disappear and we get
\begin{align*}
{_{c_2}}D_b \left({_{c_1}}D_a\left(F+K_1\right)+K_2\right)(x)&={_{c_1}}D_aF(x+b)-c_2\cdot{_{c_1}}D_aF(x)\\
&={_{c_2}}D_b \left({_{c_1}}D_a\left(F\right)\right)(x).
\end{align*}
What that means is that the keyspace has to avoid round keys, like $K_1,K_2$, whose quotients $1+K_2/K_1$ cannot be a constant $c_1$ such that the  $c_1$-differential uniformity  of $F$ is rather high. In~\cite{ICSP21} it was shown that the second order $c$-differential uniformity with respect to $c$ (that is, $c_1=c_2=c$) is at least the value of the $c$-differential uniformity of~$F$.
Perhaps, it is worth investigating some of the known good cryptographic functions with respect to a sequence of derivatives, and investigate their higher order $c$-differential uniformity, as in~\cite{ICSP21}, since, as we see above, there are instances where the key addition disappears.

\section*{Acknowledgement}

N. A. and T.K. are supported by T\"UB\.ITAK Project under Grant 120F309.
W.M. is supported by the FWF Project P 35138. 
C.R. is supported by Research Council of Norway under Grants 311646. P.S. is partially supported by a grant from the NPS Foundation.

\section*{Appendix A}
\label{appendixA}

\begin{proposition}
Proposition \textup{\ref{NixPcN}} applies for all $n > 2(j+k)$.
\end{proposition}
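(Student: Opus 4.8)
The plan is to make the ``sufficiently large $n$'' in the proof of Proposition~\ref{NixPcN} effective, by replacing the qualitative combination of the Hasse--Weil bound with B\'ezout's theorem by an explicit genus computation followed by an explicit point count. Recall that the whole argument reduces to exhibiting an $\F_{2^n}$-rational point $(y,z)$ with $yz\neq 0$ on the curve $\mathcal X$ defined by
\[
z^{2^j-1}=c^{-1}\frac{y^{2^k}+y^{2^k-1}+1}{y^{2^j(2^k+1)-1}},
\]
so the only task is to guarantee that such a point exists once $n>2(j+k)$. Note that the hypothesis that $c$ is not a $(2^j-1)$-th power forces $j\ge 2$, hence $d:=2^j-1\ge 3$; this keeps the small cases under control, since then $n>2(j+k)\ge 6$.

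First I would compute the genus $g$ of $\mathcal X$. Since $j\mid n$ we have $d\mid 2^n-1$, so $\F_{2^n}$ contains the $d$-th roots of unity and $F=\F_{2^n}(y,z)$ is a Kummer extension of $\F_{2^n}(y)$ of degree $d$; by Riemann--Hurwitz (equivalently \cite[Proposition 3.7.3]{sti}) the genus is governed by the ramification indices $r_P=d/\gcd(d,v_P(u))$, where $u=c^{-1}p(y)/y^{M}$ with $p(y)=y^{2^k}+y^{2^k-1}+1$ separable and $M=2^j(2^k+1)-1$. The ramified places are exactly the $2^k$ simple zeros of $p$ (each with $r_P=d$) and the place $y=0$, where $v(u)=-M$ and, using $2^j\equiv 1\pmod d$, one checks $\gcd(d,M)=\gcd(d,2^k)=1$, so again $r_P=d$; the place $y=\infty$ is unramified because $d\mid(M-2^k)$. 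Riemann--Hurwitz then gives $2g-2=-2d+(2^k+1)(d-1)$, that is,
\[
g=(2^k-1)(2^{j-1}-1),\qquad\text{so that}\qquad 2g=(2^k-1)(2^j-2)<2^{j+k}.
\]

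Next I would apply the Hasse--Weil bound \cite[Theorem 5.2.3]{sti} to the smooth model of $\mathcal X$; the key input, already established in the proof of Proposition~\ref{NixPcN}, is that $\F_{2^n}$ is the full constant field (this follows from the total ramification at a zero of $p$, which also shows $u$ is not a proper power, so the extension genuinely has degree $d$). Thus the number $N$ of rational points satisfies $N\ge 2^n+1-2g\,2^{n/2}$. I would then bound the rational points that must be discarded, i.e.\ those with $y=0$, $y=\infty$, or $z=0$: the place $y=0$ contributes at most one point, the zeros of $p$ contribute at most $2^k$ points (these are the points with $z=0$ and $y\neq 0$), and the $d$ places above $y=\infty$ contribute at most $2^j-1$ points, for a total of at most $2^k+2^j$ bad points. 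Hence the number of admissible points is at least $2^n+1-2g\,2^{n/2}-(2^k+2^j)$.

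Finally I would verify positivity of this quantity when $n>2(j+k)$. Writing $q=2^n$ and using $2g<2^{j+k}$ together with $2^k+2^j\le 2^{j+k}$, the discarded terms are at most $2^{j+k}(\sqrt q+1)$; since $n\ge 2(j+k)+1$ gives $2^{j+k}\le\sqrt q/\sqrt2$, this is at most $(q+\sqrt q)/\sqrt2$, and one checks $q+1-(q+\sqrt q)/\sqrt2>0$ for all $q\ge 2^7$, which is guaranteed because $j\ge 2,\ k\ge 1$ force $n>6$. The existence of an admissible point then yields $a\in\F_{2^n}^*$ for which $H$ has a nonzero root, exactly as in Proposition~\ref{NixPcN}. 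I expect the genus computation---in particular pinning down the ramification at $y=0$ and $y=\infty$ via divisibility of the valuations by $d$, and confirming the full constant field so that Hasse--Weil applies with $q=2^n$---to be the main obstacle; the concluding inequality is then routine.
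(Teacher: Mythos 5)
Your proof is correct and follows essentially the same route as the paper's Appendix~A: a Hurwitz genus computation for the Kummer extension (ramification exactly at the $2^k$ simple zeros of $p$ and at the zero of $y$, with the pole of $y$ unramified, giving $2g=(2^j-2)(2^k-1)$), followed by the Hasse--Weil bound and the removal of the at most $2^j+2^k$ points with $yz=0$ or at infinity. You additionally spell out the final numerical verification that $2^n+1-2g\,2^{n/2}-(2^j+2^k)>0$ for $n>2(j+k)$, which the paper leaves implicit.
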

\begin{proof}
To show the statement on a sufficient condition on the size of $n$, we can study geometric properties of the curve given in 
Equation \eqref{eq:z}, namely $z^{2^j-1}=c^{-1}\frac{y^{2^k}+ y^{2^k-1}+1}{y^{2^j(2^k+1)-1}}$, and its function field.
Let $F$ be the function field defined by Equation \eqref{eq:z}. As we observed, $F/\F_{2^n}(y)$ is a Kummer extension of degree $2^j-1$. Ramified places are determined by the zeros and the poles of $({y^{2^k}+ y^{2^k-1}+1})/{y^{2^j(2^k+1)-1}}$ and their multiplicities. Since the zeros of $y^{2^k}+ y^{2^k-1}+1$ are simple, they are totally ramified. Moreover, $\mathrm{gcd}(2^j-1,2^j(2^k+1)-1)=1$ implies that the zero of $y$ is totally ramified. Note that the multiplicity of the pole of $y$ is $(2^j-1)(2^k+1)$, i.e., it is divisible by the degree of the extension. Hence, the pole of $y$ is not ramified. 
Then by the Hurwitz genus formula, the genus $g(F)$ satisfies 
\begin{align*}
2g(F)-2=(2^j-1)(-2)+(2^j-2)(2^k+1), 
\end{align*}
i.e., $2g(F)=(2^j-2)(2^k-1)$. Hence the number $N(F)$ of rational places of $F$ satisfies 
\begin{align}\label{eq: N}
N(F) \geq 2^n+1-(2^j-2)(2^k-1)2^{n/2}.
\end{align}
Now we investigate the geometric properties of the curve $\mathcal{X}$ defined by Equation $ \eqref{eq:z}$, i.e., 
$f(y,z)=cz^{2^j-1}y^{2^j(2^k+1)-1}+y^{2^k}+ y^{2^k-1}+1$. There are two rational points of $\mathcal{X}$ lying at infinity, namely $(0:1:0)$ of multiplicity $2^j(2^k+1)-1$ corresponding to the unique rational place lying above the zero of $y$ and $(1:0:0)$ of multiplicity $2^j-1$ corresponding to the places lying over the pole of $y$. Hence, there are at most $2^j$ rational places corresponding to the points at infinity. Recall that an affine point $(\alpha,\beta)$ is a singular point of $\mathcal{X}$ if and only if $f(\alpha,\beta)=(\partial f(y,z)/ \partial y) (\alpha,\beta))
=(\partial f(y,z)/ \partial z )(\alpha,\beta))=0$, where $\partial f(y,z)/ \partial y$ and $\partial f(y,z)/ \partial z$ are the partial derivatives of $f$
with respect to $y$ and $z$, respectively. Since $\partial f(y,z)/ \partial y=cz^{2^j-1}y^{2^j(2^k+1)-2}+y^{2^k-2}$ and 
$\partial f(y,z)/ \partial z=cz^{2^j-2}y^{2^j(2^k+1)-1}$, the curve $\mathcal{X}$ has no affine singular points. It is a well-known fact that each non-singular rational point corresponds to a unique rational place. From the above argument and Equation \eqref{eq: N}, we conclude that the number $N(\mathcal{X})$ of rational affine points of $\mathcal{X}$ satisfies 
\begin{align*}
N(\mathcal{X})\geq 2^n-(2^j-2)(2^k-1)2^{n/2}-(2^j-1).
\end{align*}
Moreover, the line defined by $y$ intersects $\mathcal{X}$ only at infinity, and the line defined by $z$ intersects $\mathcal{X}$ at most at $2^k$ affine rational points. 
Hence, the number $N$ of rational points $(y,z)$ with $yz\neq 0$ satisfies $N\geq 2^n-(2^j-2)(2^k-1)2^{n/2}-(2^j-1)-2^k$, which gives the desired result.
\end{proof}

\section*{Appendix B}
\label{appendixB}

 We include here two tables, which are taken from~\cite{LRS22} and updated, containing some of the known classes with low $c$-differential uniformity (cDU) (we make the choice to include only the ones whose cDU is less than~4, unless it is a very known function, or is another case of a function with low cDU). We note that over the binary fields, there are not too many classes of PcN functions.
 
 We use $v_2$ as the $2$-valuation of the input, that is the largest power of $2$ dividing the input; the inverse is taken in the sense of modulo $p^n-1$ for the respective prime~$p$.
  % and  the greatest common divisor of two integers $r, s$ is denoted as $(r, s)$, for short.  
 Table~\ref{tab:long} lists the exponent of some monomials $x^d$. Table~\ref{tab2:long}  lists the known polynomials with low $c$-differential uniformity (here, $l > 1$ is a divisor of $p^n - 1$ and $g$ is a primitive element of $\F_{p^n}$, and $D_0$ is the multiplicative subgroup of $\F_{p^n}$ generated by $g$).
  \begin{center}
  \footnotesize{
   \begin{longtable}{|p{3cm}|c|p{2cm}|p{4.2cm}|c|}
   \caption{${_c}\Delta_F$ of various classes of functions $x^d$, $c \neq 1$} \label{tab:long} \\
   
   \hline \multicolumn{1}{|p{3cm}|}{$d$} & \multicolumn{1}{c|}{$\F_{p^n}$} & \multicolumn{1}{c|}{${_c}\Delta_F$} &  \multicolumn{1}{p{4.2cm}|}{Conditions} & \multicolumn{1}{c|}{Ref}\\ \hline 
\endfirsthead
   
   \multicolumn{5}{c}%
{{\tablename\ \thetable{} -- continued from previous page}} \\
   
 \hline \multicolumn{1}{|p{3cm}|}{$d$} & \multicolumn{1}{c|}{$\F_{p^n}$} & \multicolumn{1}{c|}{${_c}\Delta_F$} &  \multicolumn{1}{p{4.2cm}|}{Conditions} & \multicolumn{1}{c|}{Ref}\\ \hline 
\endhead
   
\hline \multicolumn{5}{|r|}{{Continued on next page}} \\ \hline
\endfoot   

\hline
\endlastfoot

     $2$ & $p>2$  & 2 (AP$c$N) & none &\cite{efrst}\\
              
       \hline
        $\frac{3^k+1}{2}$ & $p=3$ & 1 (P$c$N)   & $c=-1$, $\frac{2n}{\gcd(k,2n)}$ is odd&\cite{efrst} \\        
       \hline
       ${p^n-2}$ & any $p$ & 1 (P$c$N)    & $c=0$ &\cite{efrst} \\        
       \hline
         ${2^n-2}$ & $p=2$ & 2 (AP$c$N)    & $c \neq 0$, $\Trn(c)=\Trn(1/c)=1$ &\cite{efrst} \\        
       \hline
        ${2^n-2}$ & $p=2$ & 3  & $c \neq 0$, $\Trn(c)=0$ or $\Trn(1/c)=0$ &\cite{efrst} \\        
       \hline
 	${p^n-2}$ & $p>2$ & 2 (AP$c$N)    & $c \neq 0$, $(c^2-4c) \notin [\F_{p^n}]^2$, $(1-4c) \notin [\F_{p^n}]^2$, or $c=4,4^{-1}$ &\cite{efrst} \\        
       \hline
       ${p^n-2}$ & $p>2$ & 3   & $c\neq 0,4,4^{-1}$, $(c^2-4c) \in [\F_{p^n}]^2$ or $(1-4c) \in [\F_{p^n}]^2$ &\cite{efrst} \\        
       \hline
         ${2^k+1}$ & $p=2$ & $\frac{2^{\gcd(2k,n)}-1}{2^{\gcd(k,n)}-1}$ & $c \in \F_{2^{\gcd(n,k)}}\setminus \{1\}$, $\frac{n}{\gcd(n,k)} \geq 3 (n\geq3)$ &\cite{MRSZY21} \\   
         \hline
         ${2^k+1}$ & $p=2$ & $2^{\gcd(n,k)}+1$ & $c \in \F_{2^n}\setminus \F_{2^{\gcd(n,k)}}$ &\cite{MRSZY21} \\   
       \hline
   	${p^k+1}$ & any $p$ & $  \gcd(p^k+1,p^n-1)$ & $c \in \F_{p^{\gcd(n,k)}}$&\cite{MRSZY21} \\        
       \hline
        $\frac{p^k+1}{2}$ & $p>2$ &  $p^{\gcd(n,k)}+1$  & $c =-1$ &\cite{MRSZY21}\\        
       \hline
        $\frac{p^n+1}{2}$ & $p>2$ & $\leq 4$ & $c\neq \pm 1$&\cite{MRSZY21} \\  
        \hline
        $\frac{p^n+1}{2}$ & $p>2$ & $\leq 2$ & $c\neq \pm 1$, $\eta\big(\frac{1-c}{1+c}\big)=1$ $p^n\equiv 1 $ (mod 4)&\cite{MRSZY21} \\ 
       \hline
  	$\frac{2p^n-1}{3}$ & any & $\leq 3$   & $p^n \equiv 2\pmod 3$ &\cite{MRSZY21}\\  
       \hline
       $\frac{p^n+3}{2}$ & $p>3$ & $\leq 3$   & $c=-1$, $p^n \equiv 3\pmod 4$ &\cite{MRSZY21}\\   
       \hline
      $\frac{p^n+3}{2}$ & $p>3$ & $\leq 4$   & $c=-1$, $p^n \equiv 1\pmod 4$ &\cite{MRSZY21}\\  
       \hline
       $\frac{p^n-3}{2}$ & $p>2$ & $\leq 4$   & $c=-1$ &\cite{MRSZY21}\\        
       \hline
	$\frac{3^n+3}{2}$ & $p=3$ & 2 (AP$c$N)    & $c=-1$, $n$ even &\cite{MRSZY21}\\   
	\hline
	$\frac{3^n-3}{2}$ & $p=3$ & 6  & $c=-1$, $n=0\pmod 4$ &\cite{MRSZY21}\\ 
	\hline
	$\frac{3^n-3}{2}$ & $p=3$ & 4  & $c=-1$, $n\neq0\pmod 4$ &\cite{MRSZY21}\\ 
	\hline
	$\frac{3^n-3}{2}$ & $p=3$ & 2 (AP$c$N)  & $c=0$, &\cite{MRSZY21}\\ 
       \hline
       $\frac{3^n+1}{4}\, (\frac{3^k+1}{4})^{-1}$  & $p=3$ & 1 (P$c$N)   & $n,k$ odd, $c=-1$,  $\gcd(n,k)=1$  &\cite{ZH21}\\        
       \hline
         $\frac{5^n-1}{2} + (\frac{5^k+1}{2})^{-1} $ & $p=5$ & 1 (P$c$N)   & $n,k$ odd, $c=-1$,  $\gcd(n,k)=1$  &\cite{ZH21}\\        
       \hline
         $\frac{p^n+1}{2}\,(p^k+1)^{-1}$ & $p>2$ & $\leq 6$  & $d$ even, $c=-1$, $p^n \equiv 3 \pmod 4$  &\cite{ZH21}\\        
       \hline
       $\frac{p^n+1}{2}\, (p^k+1)^{-1}$ & $p>2$ & $\leq 3$  & $d$ odd, $c=-1$,  $p^n \equiv 3 \pmod 4$  &\cite{ZH21}\\        
       \hline
         $\frac{p^n+1}{4}+\frac{p^n-1}{2}$ & $p>2$ & $\leq 3$    & $c=-1$, $p^n \equiv 7 \pmod{8}$  &\cite{ZH21}\\ 
         	\hline
	${\frac{p^n-1}{2}+p^k+1}$ & $p>2$ & $\leq 3$    & $c=-1$, $\frac{n}{\gcd(n,k)}$ odd, $p^n \equiv 3 \pmod{4}$  &\cite{ZH21}\\ 
       \hline
	${\frac{p^n-1}{2}+p^k+1}$ & $p>2$ & $\leq 6$    & $c=-1$, $\frac{n}{\gcd(n,k)}$ odd, $p^n \equiv 1 \pmod{4}$  &\cite{ZH21}\\  

	 \hline
	${\frac{p^l+1}{2}}$ &  $p>2$ & 1 (P$c$N)    & $c=-1$, $l=0$ or $l$ even and $n$ odd, or $l,n$ both even together with $t_2 \geq t_1 +1$, where $n=2^{t_1u}$ and $l=2^{t_2}$ such that $2 \not |  u,v$&\cite{HPRS21}\\ 
       \hline
	${\frac{p^l+1}{2}}$ & $p>2$ & $\frac{p+1}{2}$    & $c=-1$, $\gcd(l,2n)=1$, $p\equiv 1\pmod{4}$ or $p\equiv 3 \pmod{8}$ &\cite{HPRS21}\\        
	 \hline
	${\frac{5^l+1}{2}}$ & $p=5$ & 3 & $c=-1$, $\gcd(l,2n)=1$ &\cite{HPRS21}\\        
	 \hline
	${\frac{3^l+1}{2}}$ & $p=3$ & 2 (AP$c$N) & $c=-1$, $\gcd(l,2n)=1$ &\cite{HPRS21}\\     
\hline
$p^4+(p-2)p^2$ +  $p(p-1)+1$ & $p>2$ & 1 (P$c$N)    & $c=-1$, $n=5$  &\cite{HPRS21}\\        
\hline
	${\frac{p^5+1}{p+1}}$ & $p>2$ & 1 (P$c$N) & $c=-1$, $n=5$ &\cite{HPRS21}\\
\hline
$(p-1)p^6+p^5+(p-2)p^3+(p-1)p^2+p$ & $p>2$ & 1 (P$c$N)    & $c=-1$, $n=7$  &\cite{HPRS21}\\
\hline
	${\frac{p^7+1}{p+1}}$ & $p>2$ & 1 (P$c$N) & $c=-1$, $n=7$ &\cite{HPRS21}\\
\hline
${\frac{3^n+7}{2}}$ & $p=3$ & $\leq 2$ (AP$c$N) & $c=-1$, $n$ odd  &\cite{WLZ21}\\  
\hline
	$\frac{3^{\frac{n+1}{2}-1}}{2}$ & $p=3$ & $\leq 2$ (AP$c$N) & $c=-1$, $n\equiv 1 \pmod{4}$ &\cite{Yan21}\\ 
\hline
$\frac{3^{\frac{n+1}{2}-1}}{2}+\frac{3^n-1}{2}$ & $p=3$ & $\leq 2$ (AP$c$N) & $c=-1$, $n\equiv 3 \pmod{4}$ &\cite{Yan21}\\ 
\hline
$\frac{3^{n+1}-1}{8}$ & $p=3$ &  $\leq 2$ (AP$c$N) & $c=-1$, $n\equiv 1 \pmod{4}$ &\cite{Yan21}\\ 
\hline
$\frac{3^{n+1}-1}{8}+\frac{3^n-1}{2}$ & $p=3$ &  $\leq 2$ (AP$c$N) & $c=-1$, $n\equiv 3 \pmod{4}$ &\cite{Yan21}\\ 
\hline
$(3^{\frac{n+1}{4}}-1)(3^{\frac{n+1}{2}}+1)$ & $p=3$ &  $\leq 4$ & $c=-1$, $n\equiv 3 \pmod{4}$ &\cite{Yan21}\\ 
\hline
$\frac{3^n+1}{4}+\frac{3^n-1}{2}$ & $p=3$ &  $\leq 4$ & $c=-1$, $n$ odd &\cite{Yan21}\\ 
\hline
${d^{-1}}\pmod{p^n-1}$ & any $p$ & 1 (P$c'$N)  & $x^d$ is P$c$N, $c'=c^d$ &\cite{WZ21}\\ 
\hline
$\{2^j,2^j(2^k+1),k,j\geq 0\}$ & $p=2$ & 1 (P$c$N)  &  &\cite{WZ21}\\ 
\hline
odd $2 (p^k+1)^{-1}\pmod{p^n-1},k\geq 0$ & $p>2$ & 1 (P$c$N)  & $c=-1$ &\cite{WZ21}\\
\hline
$\frac{p^n+1}{2}\left(\frac{p^k+1}{2} \right)^{-1}  $ & $p>2$ & 1 (P$c$N)  & $c=-1$, $v_2(k)=v_2(n)$, \hspace{8mm} $p^n\equiv 1 \pmod{4}$ &\cite{WZ21}\\
\hline
         \end{longtable}
        }
            \end{center}
        
         \begin{center}
        	\scriptsize{
        		\begin{longtable}{|p{5cm}|c|p{1.55cm}|p{3cm}|c|}
        			\caption{${_c}\Delta_F$ of various classes of functions $F(x)$, $c \neq 1$} \label{tab2:long} \\
        			
        			\hline \multicolumn{1}{|p{5.2cm}|}{$F(x)$} & \multicolumn{1}{c|}{$\F_{p^n}$} & \multicolumn{1}{c|}{${_c}\Delta_F$} &  \multicolumn{1}{p{3cm}|}{Conditions} & \multicolumn{1}{c|}{Ref}\\ \hline 
        			\endfirsthead
        			
        			\multicolumn{5}{c}%
        			{{\tablename\ \thetable{} -- continued from previous page}} \\
        			
        			\hline \multicolumn{1}{|p{5.2cm}|}{$F(x)$} & \multicolumn{1}{c|}{$\F_{p^n}$} & \multicolumn{1}{c|}{${_c}\Delta_F$} &  \multicolumn{1}{p{3cm}|}{Conditions} & \multicolumn{1}{c|}{Ref}\\ \hline 
        			\endhead
        			
        			\hline \multicolumn{5}{|r|}{{Continued on next page}} \\ \hline
        			\endfoot   
        			
        			\hline
        			\endlastfoot
        			
        			   $x^{10} - ux^6 - u^2x^2$ & $p=3$ & $\geq 2$ & $u\in \F_{3^n}$&\cite{efrst} \\ 
        			\hline
        			$L(x)(\sum_{i=1}^{l-1} L(x)^{\frac{p^n-1}{l}i}+u)$ & any $p$ & $\leq 2$ (AP$c$N) & $L$ an $\F_p$-linearized polynomial,  $l|(p^n-1)$, $u \neq 1,(1-l) \mod p$, $1-\frac{l}{(1-c)(u+l-1)},1+\frac{l}{(1-c)(u-1)} \in D_0$ &\cite{WLZ21}\\  
        			\hline
        			$(x^{p^k}-x)^{\frac{q-1}{2}+1}+a_1x+a_2x^{p^k}+a_3x^{p^{2k}}$ & $p=3$ & $\leq 2$ (AP$c$N) & $c=-1$, $0\leq i\leq2$, $a_1,a_2,a_3\in \F_3$, $a_1+a_2+a_3\neq 0$  &\cite{WLZ21}\\  
        			\hline    			
        			$f(x)(\Trn(x)+1)+f(x+\gamma)\Trn(x)$ & $p=2$ & 1 (P$c$N) &  $f(x)$ is P$c$N, $\gamma \in \F_{p^n}^*$  &\cite{WLZ21}\\  
        			\hline
        			$L(x)+L(\gamma)(\Trn(x))^{q-1}$ &any $p$ & 1 (P$c$N) & $L$ an $\F_q$-linearized polynomial, $\gamma \in \F_{q}^*$, $\Trn(\gamma) =0$  &\cite{WLZ21}\\  
        			\hline
        			$u\phi(x)+g((\Trn(x))^q)-g(\Trn(x))$ &any $p$ & 1 (P$c$N) & $\phi$ an $\F_q$-linearized polynomial, $u\in \F_q^*$, ker($\phi$)$\cap$ ker($\Trn$)=$\{0\}$, $g \in \F_{q^n}[x]$  &\cite{WLZ21}\\ 
        			\hline
        			$u(x^q-x)+g(\Trn(x))$ &any $p$ & 1 (P$c$N) & $g \in \F_{q^n}[x]$ a permutation of $\F_q$, $u \in \F_{q}^*$, $p \nmid n$  &\cite{WLZ21}\\ 
        			\hline
        			$F(x)+u\Trn(vF(x))$ & any $p$ & 1 (P$c$N)  & $F$ is P$c$N, $\Trn(-uv)\neq 1$ & \cite{LRS22}\\
        			\hline
        			$L_1(x)+L_1(\gamma)\Trn(L_2(x))$ & any $p$  &  1 (P$c$N)   &   $\Trn\left(\frac{L_1(\gamma)}{1-c} \right)=0$, $\Trn(\gamma)=0$ & \cite{LRS22}\\
        			\hline
        			$L(x)+\prod_{i=1}^s\left(\Trn(x^{2^{k_i}+1}+\delta_i)\right)^{g_i}$ & $p=2$ &  $\leq 2$ (AP$c$N) & $1\leq k_i \leq n-1$ & \cite{LRS22}\\
        			\hline
        			$L(x)+\prod_{i=1}^s\left(\alpha_i\Tr_{q^n/q^m}(x^{2^{k_i}+1}+\delta_i)\right)^{g_i}$ & $p=2$ &  $\leq 2$ (AP$c$N) & $g_i\geq 1,\,\delta_i\in\F_{2^n}$, $\alpha_i\in\F_{2^m}^*$, $1\leq k_i \leq n-1$ & \cite{LRS22}\\
        			\hline
 $L(x)+u\sum_{i=1}^t\left(\Tr_{q^n/q^m}(x)^{k_i}+\delta_i \right)^{s_i}$ & any $p$ & 1 (P$c$N) & 
        			$pm\,|\,n$, $1\leq t\in\mathbb{Z}_{>0}$, $u\in\F_{p^m}^*$, $\delta_i\in\F_{p^m}$, 
        			$1\leq k_i,s_i\leq p^n-1$, $L$ linearized permutation, $c\in \F_{p^m}\setminus\{1\}$ & \cite{LRS22}\\
        			\hline
$(x^{2^m}+x+\delta)^{2^{2m}+1}+x$ & $p=2$ & 1 (PcN) & $n=3m$, $c\in \F_{2^m}\setminus\{1\}$, $\delta\in\F_{2^n}$ & \cite{GHS22}\\
\hline
$(x^{2^m}+x+\delta)^{2^{2m}+1}+x$ & $p=2$ & 2 (APcN) & $n=3m$, $c\in \F_{2^n}\setminus \F_{2^m}$, ${\rm Tr}_m^{3m}(\delta)=1$ & \cite{GHS22}\\
\hline
$(x^{2^m}+x+\delta)^{2^{2m}+1}+x$ & $p=2$ & $\leq 4$ & $n=3m$, $c\in \F_{2^n}\setminus \F_{2^m}$, ${\rm Tr}_m^{3m}(\delta)\neq 1$ & \cite{GHS22}\\
\hline
$(x^{2^m}+x+\delta)^{2^{2m-1}+2^{m-1}}+x$ & $p=2$ & 1 (PcN) &$n=3m,m\not\equiv\pm1\pmod 3$, $c\in \F_{2^m}\setminus\{1\}$, $\delta\in\F_{2^n}$ & \cite{GHS22}\\
\hline
$(x^{2^m}+x+\delta)^{2^{2m-1}+2^{m-1}}+x$  & $p=2$ & 2 (APcN) & $n=3m,m\not\equiv\pm1\pmod 3$, $c\in \F_{2^n}\setminus \F_{2^m}$, ${\rm Tr}_m^{3m}(\delta)=0$ & \cite{GHS22}\\
\hline
$(x^{2^m}+x+\delta)^{2^{2m-1}+2^{m-1}}+x$  & $p=2$ & $\leq 4$ & $n=3m,m\not\equiv\pm1\pmod 3$, $c\in \F_{2^n}\setminus \F_{2^m}$, ${\rm Tr}_m^{3m}(\delta)\neq 0$ & \cite{GHS22}\\
\hline
$(x^{3^m}-x+\delta)^{3^{2^m-1}+2\cdot3^{m-1}}+x$ & $p=3$ & 1 (PcN) &$n=2m$, $c\in \F_{3^m}\setminus\{1\}$, $\delta\in\F_{2^n}$, or  $c\in \F_{3^n}\setminus \F_{3^m}$, ${\rm Tr}_m^{2m}(\delta)=0$ & \cite{GHS22}\\
\hline
$(x^{3^m}-x+\delta)^{3^{2^m-1}+2\cdot3^{m-1}}+x$  & $p=3$ & $3$ & $n=2m$, $c\notin \F_{3^m}$, ${\rm Tr}_m^{2m}(\delta)\neq 0$ & \cite{GHS22}\\
\hline
$(x^{p^m}-x+\delta)^{p^{m+1}+1}+x$ &  $p>2$ & 1 (PcN) &$n=2m$, $c\in \F_{p^m}\setminus\{1\}$, ${\rm Tr}_m^{2m}(\delta)=0$, or $\frac{{\rm Tr}_m^{2m}(\delta)-1}{{\rm Tr}_m^{2m}(\delta)}$ is a $(p-1)$-th power & \cite{GHS22}\\
\hline
$(x^{p^m}-x+\delta)^{p^{m+1}+1}+x$ &  $p>2$ & $p$ &$n=2m$, $c\in \F_{p^n}\setminus \F_{p^m}$, ${\rm Tr}_m^{2m}(\delta)=0$, or $\frac{{\rm Tr}_m^{2m}(\delta)-1}{{\rm Tr}_m^{2m}(\delta)}$ is a $(p-1)$-th power & \cite{GHS22}\\
\hline
  \end{longtable}
}
\end{center}

\end{document}